\newtheorem{assumption}{\textbf{Assumption}}
\newtheorem{lemma}{\textbf{Lemma}}
\newtheorem{definition}{\textbf{Definition}}
\newtheorem{theorem}{\textbf{Theorem}}
\newtheorem{remark}{\textbf{Remark}}
\newtheorem{problem}{\textbf{Problem}}
\newcommand{\T}{^{\mbox{\tiny T}}}
\newcommand{\eps}{\varepsilon}
\DeclareMathOperator{\re}{Re}
\DeclareMathOperator{\Span}{span}
\DeclareMathOperator{\round}{round}
\DeclareMathOperator{\im}{Im}
\newcommand{\R}{\mathbb{R}}
\newcommand{\N}{\mathcal{N}}
\let\leq\leqslant
\let\geq\geqslant
\newenvironment{proof}[1][Proof]%
{\par\noindent\textit{#1:\ }}%
{\hspace*{\fill} \rule{6pt}{6pt}}
\newenvironment{proof*}[1][Proof]%
{\par\noindent\textit{#1:\ }}{}
\newenvironment{system}[1]%
{\setlength{\arraycolsep}{0.5mm}\left\{ \; \begin{array}{#1}}%
    {\end{array} \right.}
\newenvironment{system*}[1]%
{\setlength{\arraycolsep}{0.5mm} \begin{array}{#1}}%
  {\end{array}}
\begin{document}
	
	\title{Scalable $\delta$-Level Coherent State Synchronization of Multi-Agent Systems with Adaptive Protocols and Bounded Disturbances}
\author{Donya Nojavanzadeh, Zhenwei Liu, Ali Saberi and Anton A. Stoorvogel %
	\thanks{Donya Nojavanzadeh is with School of Electrical Engineering and Computer Science, Washington State University, Pullman, WA 99164, USA {\tt\small donya.nojavanzadeh@wsu.edu}}	
	\thanks{Zhenwei Liu is with College of Information Science and Engineering, Northeastern University, Shenyang 110819, P. R. China {\tt\small jzlzwsy@gmail.com}}	
	\thanks{Ali Saberi is with School of Electrical Engineering and Computer Science, Washington State University, Pullman, WA 99164, USA {\tt\small saberi@eecs.wsu.edu}} 	
	\thanks{Anton A. Stoorvogel is with Department of Electrical Engineering, Mathematics and Computer Science, University of Twente, P.O. Box 217, Enschede, The Netherlands {\tt\small A.A.Stoorvogel@utwente.nl}} 	
}

	\maketitle
	
	\begin{abstract}
In this paper, we study scalable $\delta-$level
  coherent state synchronization for multi-agent systems (MAS) where
  the agents are subject to bounded disturbances/noises. We propose a
  scale-free framework designed solely based on the knowledge of agent
  models and agnostic to the communication graph and the size of the network. We define the level of coherency for each agent as the norm
  of the weighted sum of the disagreement dynamics with its neighbors. The objective is to restrict the network's coherency level to $\delta$ without \emph{a-priori} information about the disturbance.
	\end{abstract}
	
	
	\footnotetext{ Z. Liu's work is supported by the Nature Science Foundation of China under 62273084 and the Nature Science Foundation of Liaonin Province, China under 2022-MS-111 and 2022JH25/10100008, and the Foundation of Key Laboratory of System Control and Information Processing, China under Scip202209.}

\section{Introduction}

Synchronization and consensus problems of MAS have become a hot topic
in recent years due to a wide range of applications in cooperative
control of MAS including robot networks, autonomous vehicles,
distributed sensor networks, and energy power systems. The objective
of synchronization of MAS is to secure an asymptotic agreement on a
common state or output trajectory by local interaction among agents,
see \cite{bai-arcak-wen,mesbahi-egerstedt,ren-book,wu-book,%
  liu-nojavanzedah-saberi-2022-book,saberi-stoorvogel-zhang-sannuti}
and references therein.

State synchronization inherently requires homogeneous networks. When
each agent has access to a linear combination of its own state
relative to that of the neighboring agents, it is called full-state
coupling. If the combination only includes a subset of the states
relative to the corresponding states of the neighboring agents, it is
called partial-state coupling. In the case of state synchronization
based on diffusive full-state coupling, the agent dynamics progress
from single- and double-integrator dynamics (e.g.\
\cite{saber-murray2,ren,ren-beard}) to more general dynamics (e.g.\
\cite{scardovi-sepulchre,tuna1,wieland-kim-allgower}). State
synchronization based on diffusive partial-state coupling has been
considered, including static design
\cite{liu-stoorvogel-saberi-nojavanzadeh-ijrnc19,liu-zhang-saberi-stoorvogel-auto},
dynamic design
\cite{kim-shim-back-seo,seo-back-kim-shim-iet,seo-shim-back,su-huang-tac,tuna3},
and designs based on localized information exchange with neighbors
\cite{chowdhury-khalil,scardovi-sepulchre}.  Recently, we have
introduced a new generation of \emph{scale-free} protocols (\emph{i.e.,} a protocol designed without using any knowledge about the communication graph and the size of the network) for
synchronization and almost synchronization of homogeneous and
heterogeneous MAS where the agents are subject to external
disturbances, input saturation, communication delays{,} and input
delays; see for example
\cite{liu-nojavanzadeh-saberi-saberi-stoorvogel-scl,%
  liu-donya-dmitri-saberi-stoorvogel-ccdc2020,liu-saberi-stoorvogel-nojavanzadeh-cdc19,%
  donya-liu-saberi-stoorvogel-ACC2020}.

Synchronization and almost synchronization in the presence of external
disturbances are studied in the literature, where three classes of
disturbances have been considered, namely
\begin{enumerate}
\item Disturbances and measurement noise with known frequencies
\item Deterministic disturbances with finite power
\item Stochastic disturbances with bounded variance
\end{enumerate}

For disturbances and measurement noise with known frequencies, it is
shown in \cite{zhang-saberi-stoorvogel-ACC2015} and
\cite{zhang-saberi-stoorvogel-CDC2016} that exact synchronization is
achievable. This is shown in \cite{zhang-saberi-stoorvogel-ACC2015}
for heterogeneous MAS with minimum-phase and non-introspective agents
and networks with time-varying directed communication graphs. Then,
\cite{zhang-saberi-stoorvogel-CDC2016} extended these results utilizing localized information exchange for
non-minimum phase agents.

For deterministic disturbances with finite power, the notion of
$H_\infty$ almost synchronization is introduced by Peymani et.al for
homogeneous MAS with non-introspective agents utilizing additional
communication exchange \cite{peymani-grip-saberi}. The goal of
$H_\infty$ almost synchronization is to reduce the impact of
disturbances on the synchronization error to an arbitrary degree
of accuracy (expressed in the $H_\infty$ norm). This work was extended
later in
\cite{peymani-grip-saberi-wang-fossen,zhang-saberi-grip-stoorvogel,%
  zhang-saberi-stoorvogel-sannuti2} to heterogeneous MAS with
non-introspective agents and without the additional communication exchange and
for network with time-varying graphs. $H_\infty$ almost
synchronization via static protocols is studied in
\cite{stoorvogel-saberi-liu-nojavanzadeh-ijrnc19} for MAS with passive
and passifiable agents. Recently, necessary and sufficient conditions
are provided in \cite{stoorvogel-saberi-zhang-liu-ejc} for solvability
of $H_\infty$ almost synchronization for homogeneous networks with
non-introspective agents and without additional communication
exchange. Finally, we developed a scale-free framework for $H_\infty$
almost state synchronization for homogeneous network
\cite{liu-saberi-stoorvogel-donya-almost-automatica} utilizing
suitably designed localized information exchange.

In the case of stochastic disturbances with bounded variance, the
concept of stochastic almost synchronization is introduced by
\cite{zhang-saberi-stoorvogel-stochastic2} where both stochastic
disturbance and disturbance with known frequency can be present. The
idea of stochastic almost synchronization is to make the stochastic
RMS norm of synchronization error arbitrarily small in the
presence of colored stochastic disturbances that can be modeled as the
output of linear time-invariant systems driven by white noise
with unit power spectral intensities. By augmenting this model with
the agent model one can essentially assume that stochastic
disturbances are white noise with unit power spectral intensities. In
this case, under linear protocols, the stochastic RMS norm of
synchronization error is the $H_2$ norm of the transfer function from
disturbance to the synchronization error. As such, one can formulate
the stochastic almost synchronization equivalently in a deterministic
framework where the objective is to make the $H_2$ norm of the
transfer function from disturbance to synchronization error
arbitrarily small. This deterministic approach is referred to as
almost $H_2$ synchronization problem which is equivalent to stochastic
almost synchronization problem. Recent work on $H_2$ almost
synchronization problem \cite{stoorvogel-saberi-zhang-liu-ejc} provides necessary and sufficient conditions for solvability of $H_\infty$ almost synchronization for homogeneous
networks with non-introspective agents and without additional
communication exchange. Finally, $H_2$ almost synchronization via
static protocols is studied in
\cite{stoorvogel-saberi-liu-nojavanzadeh-ijrnc19} for MAS with passive
and passifiable agents.

The above-described techniques of $H_\infty$ and $H_2$ almost synchronization of MAS have the following disadvantages.
\renewcommand\labelitemi{{\boldmath$\bullet$}}
\begin{itemize}
\item \textit{Tuning requirement:} The designed protocols for
  $H_\infty$ and $H_2$ almost synchronization of MAS are parameterized
  with a tuning parameter. The $H_\infty$ and $H_2$ norm of
  transfer function from the external disturbances can be made
  arbitrary small by a suitable choice of the tuning parameter but
  this relationship is strongly dependent on the communication graph.
\item \textit{Dependency on the size of disturbance:} In $H_\infty$
  and $H_2$ almost synchronization, the size of the synchronization error
  is linearly related to the size of the disturbance. Therefore, to guarantee a certain level of coherence, we need prior
  knowledge of the size of the disturbance to choose an
  appropriate tuning parameter $\delta$. 
\end{itemize}

On the other hand, in this paper, we consider scalable $\delta-$level
coherent state synchronization of homogeneous MAS in the presence of
bounded disturbances/noises, where one can achieve a guaranteed level
of coherence (\emph{i.e.,} the level of synchronization) without any prior
knowledge about the network and the
size of the disturbances. The contribution of this work is
twofold.
\begin{enumerate}
\item The protocols are designed solely based on the knowledge of the
  agent models, using no information about the communication
  network such as bounds on the spectrum of the associated Laplacian
  matrix or the number of agents. That is to say, the universal
  nonlinear protocols are scale-free and work for any
  communication network. There is not even the need to impose restrictions on the
  graph's connectivity.
\item We achieve scalable $\delta-$level coherent state
  synchronization for MAS in the presence of bounded
  disturbances/noises such that, for any given $\delta$, one can
  restrict the level of coherency of the network to $\delta$. The only
  assumption is that the disturbances are bounded (which is a very
  reasonable condition). But the protocol is independent of the bound
  and does not require any other knowledge about the disturbances.
\end{enumerate}

Note that we only consider disturbances to different agents and
not measurement noise. For further clarification see Remark \ref{rem2}.

\subsection*{Preliminaries on graph theory}

Given a matrix $M\in \mathbb{R}^{m\times n}$, $M\T$ denotes its
conjugate transpose. A square matrix $M$ is said to be Hurwitz stable
if all its eigenvalues are in the open left half complex plane.
$\im M$ denotes the image of matrix $M$, and $M_a\otimes M_b$ depicts the
Kronecker product of $M_a$ and $M_b$. $I_n$ denotes the
$n$-dimensional identity matrix and $0_n$ denotes $n\times n$ zero
matrix; sometimes we drop the subscript if the dimension is clear from
the context. For a signal $u$, we denote the $L_2$ norm by $||u||$ or
$||u||_2$, and the $L_\infty$ norm by $||u||_\infty$.

To describe the information flow among the agents we associate a
\emph{weighted graph} $\mathcal{G}$ to the communication network. The
weighted graph $\mathcal{G}$ is defined by a triple
$(\mathcal{V}, \mathcal{E}, \mathcal{A})$ where
$\mathcal{V}=\{1,\ldots, N\}$ is a node set, $\mathcal{E}$ is a set of
pairs of nodes indicating connections among nodes, and
$\mathcal{A}=[a_{ij}]\in \mathbb{R}^{N\times N}$ is the weighted
adjacency matrix with non-negative elements $a_{ij}$. Each pair in
$\mathcal{E}$ is called an \emph{edge}, where $a_{ij}>0$ denotes an
edge $(j,i)\in \mathcal{E}$ from node $j$ to node $i$ with weight
$a_{ij}$. Moreover, $a_{ij}=0$ if there is no edge from node $j$ to
node $i$. We assume there are no self-loops, \emph{i.e.,} we have
$a_{ii}=0$. A \emph{path} from node $i_1$ to $i_k$ is a sequence of
nodes $\{i_1,\ldots, i_k\}$ such that $(i_j, i_{j+1})\in \mathcal{E}$
for $j=1,\ldots, k-1$. A \emph{directed tree} is a sub-graph (a subset of
nodes and edges) in which every node has exactly one parent node
except for one node, called the \emph{root}, which has no parent
node. A \emph{directed spanning tree} is a sub-graph which is a
directed tree containing all the nodes of the original graph. If a
directed spanning tree exists, the root has a directed path to every
other node in the tree \cite{royle-godsil}.

For a weighted graph $\mathcal{G}$, the matrix
$L=[\ell_{ij}]$ with
\[
\ell_{ij}=
\begin{system}{cl}
\sum_{k=1}^{N} a_{ik}, & i=j,\\
-a_{ij}, & i\neq j,
\end{system}
\]
is called the \emph{Laplacian matrix} associated with the graph
$\mathcal{G}$. The Laplacian matrix $L$ has all its eigenvalues in the
closed right half plane and at least one eigenvalue at zero associated
with right eigenvector $\textbf{1}$ \cite{royle-godsil}. Moreover, if
the graph contains a directed spanning tree, the Laplacian matrix $L$
has a single eigenvalue at the origin and all other eigenvalues are
located in the open right-half complex plane \cite{ren-book}.

For a given graph $\mathcal{G}$ every maximal (by inclusion)
strongly connected sub-graph is called a bi-component of the graph. A
bi-component without any incoming edges is called a basic
bi-component. Every graph has at least one basic bi-component. Networks
have one unique basic bi-component if and only if the network contains
a directed spanning tree. In general, every node in a network can be
reached by at least one basis bi-component, see \cite[page
7]{stanoev-smilkov-2013}. 

In the absence of a directed spanning tree, the Laplacian matrix of
the graph has an eigenvalue at the origin with a multiplicity $k$
larger than $1$. This implies that it is a $k$-reducible matrix and
the graph has $k$ basic bi-components.  The book \cite[Definition
2.19]{wu-book} shows that, after a suitable permutation of the nodes,
a Laplacian matrix with $k$ basic bi-components can be written in the
following form:
\begin{equation}\label{Lstruc}
	L=\begin{pmatrix}
		L_0 & L_{01}     & L_{02} & \cdots  & L_{0k} \\
		0   & L_1        & 0      & \cdots  & 0 \\
		\vdots & \ddots  & \ddots & \ddots  & \vdots \\
		\vdots &         & \ddots & L_{k-1} & 0 \\
		0      & \cdots & \cdots  & 0       & L_k
	\end{pmatrix}
\end{equation}
where $L_1,\ldots, L_k$ are the Laplacian matrices associated to the
$k$ basic bi-components $\{ \mathcal{B}_1,\ldots, \mathcal{B}_k\}$ in the network. These matrices have a simple eigenvalue in $0$ because
they are associated with a strongly connected component. On the other
hand, $L_0$ contains all non-basic bi-components and is a grounded
Laplacian with all eigenvalues in the open right-half plane. After
all, if $L_0$ would be singular then the network would have an
additional basic bi-component.

\section{Problem formulation}

Consider a MAS consisting of $N$ identical linear
agents
\begin{equation}\label{agent-g-noise}
  \dot{x}_i=Ax_i+Bu_i+Ew_i,\quad i=1,\hdots, N
\end{equation}
where $x_i\in\mathbb{R}^n$, $u_i\in\mathbb{R}^m$, and
$w_i\in\mathbb{R}^w$ are state, input, and external disturbance/noise, respectively.

The communication network is such that each agent observes a weighted
combination of its own state relative to the state of other agents,
\emph{i.e.}, for the protocol of agent $i$, the signal
\begin{equation}\label{zeta1noise}
  \zeta_i=\sum_{j=1}^{N}a_{ij}(x_i-x_j)
\end{equation}
is available where $a_{ij}\geq0$ and $a_{ii}=0$. As explained before,
the matrix $\mathcal{A}=[a_{ij}]$ is the weighted adjacency matrix of a directed
graph $\mathcal{G}$. This matrix describes the communication
topology of the network, where the nodes of the network correspond to the
agents. We can also express the dynamics in terms of an associated
Laplacian matrix $L=[\ell_{ij}]_{N\times N}$, such that the signal
$\zeta_i$ in \eqref{zeta1noise} can be rewritten in the following form
\begin{equation}\label{zetanoise}
  \zeta_i= \sum_{j=1}^{N}\ell_{ij}x_j.
\end{equation}
The size of $\zeta_i(t)$ can be viewed as the level of coherency at agent $i$.

In this paper, we define the set of communication graphs as follows.

\begin{definition}\label{def1} 
  $\mathbb{G}^N$ denotes the set of directed graphs of $N$ agents.
\end{definition}

We make the following assumption. 
\begin{assumption}\label{ass}\mbox{}
  \begin{enumerate}
  \item $(A,B)$ is stabilizable.
  \item $\im E \subset \im B$.
  \item The disturbances $w_i$ are bounded for
    $i\in \{1,2,\ldots, N\}$. In other words, we have that
    $\| w \|_\infty < \infty$ for $i\in \{1,2,\ldots, N\}$.
  \end{enumerate}
\end{assumption}

\begin{remark}
  Assumptions 1.1 is an obvious necessary condition.  Assumption 1.2
  is a necessary condition for (almost) synchronization of a MAS with
  disturbances. Please see \cite[Lemma 1]{stoorvogel-saberi-zhang-liu-ejc}.
  As argued before, Assumption 1.3 is basically valid
  for any disturbance in the real world. The key point is that we do
  not need to know the bound.
\end{remark}

Next, in the following definition, we define the concept of
$\delta$-level-coherent state synchronization for the MAS with agents
\eqref{agent-g-noise} and communication information \eqref{zetanoise}.

\begin{definition}\label{delta-level}
  For any given $\delta>0$, the MAS \eqref{agent-g-noise} and
  \eqref{zetanoise} achieves $\delta$-level-coherent state
  synchronization if there exist a $T>0$ such that
  \[
    \|\zeta_i(t)\| \le\delta,
  \]
  for all $t>T$, for all $i\in\{1,\ldots, N\}$, and for any bounded disturbance.
\end{definition}

\begin{problem}\label{prob_x}
  Consider a MAS \eqref{agent-g-noise} with associated network
  communication \eqref{zetanoise} and a given parameter
  $\delta>0$. The \textbf{scalable {\boldmath $\delta$}-level-coherent
    state synchronization in the presence of bounded external
    disturbances/noises} is to find, if possible, a fully distributed
  nonlinear protocol for agent $i$ using only knowledge of agent
  models, \emph{i.e.,} $(A, B)$, and $\delta$, of the form
  \begin{equation}\label{out_dyn}
    \begin{system}{cl}
      \dot{x}_{i,c}&=f(x_{i,c},\zeta_i),\\
      u_i&=g(x_{i,c},\zeta_i),
    \end{system}
  \end{equation}
  where $x_{i,c}\in\mathbb{R}^{n_c}$ is the state of protocol, such
  that the MAS with the above protocol achieves
  $\delta$-level-coherent state synchronization in the presence of
  disturbances/noises. In other words, for any graph
  $\mathscr{G}\in\mathbb{G}$ with any size of the network $N$, and for
  all bounded disturbances $w_i$, the MAS achieves $\delta$-level
  coherent state synchronization as defined in Definition
  \ref{delta-level}.
\end{problem}

\section{Protocol design}

In this section, we design an adaptive protocol to achieve the
objectives of Problem \ref{prob_x} through three steps as following.
\begin{tcolorbox}[colback=white]
  \textbf{Step 1. Find matrix} $\mathbf{P}$\textbf{:} Under the assumption that $(A,B)$ is stabilizable, there exists a
  matrix $P>0$ satisfying the following algebraic Riccati equation
  \begin{equation}\label{eq-Riccati}
    A\T P + PA -PBB\T P + I =0.
  \end{equation}
  \textbf{Step 2. Choose parameter} $\mathbf{d}$\textbf{:} Define $\bar{\delta}=\delta^2\lambda_{\min}(P)$ and choose parameter $d$ such
  that
  \begin{equation}\label{dchoice}
    0<d< \bar{\delta}.
  \end{equation}
  Note that $\zeta_i\T P \zeta_i\le \bar{\delta}$ implies
  $\|\zeta_i(t)\| \le\delta$.
  
  \textbf{Step 3. Obtain protocol}\textbf{:}
  For any parameter $d$ satisfying \eqref{dchoice}, we design the
  following adaptive protocol
  \begin{equation}\label{protocol}
    \begin{system*}{ccl}
      \dot{\rho}_i &=& \begin{cases} \zeta_i\T PBB\T P \zeta_i &
        \text{ if } \zeta_i\T P \zeta_i \geq d,
        \\
        0 & \text{ if } \zeta_i\T P \zeta_i < d,
      \end{cases} \\
      u_i &=& -\rho_i B\T P \zeta_i.
    \end{system*}
  \end{equation}
\end{tcolorbox}

For classical adaptive controllers without disturbances, we would use
\[
  \dot{\rho}_i = \zeta_i\T PBB\T P \zeta_i
\]
and it can be shown that the scaling parameter $\rho_i$ remains finite
if no disturbances are present using classical techniques. However,
with persistent disturbances, this classical adaptation would imply
that the scaling parameter $\rho_i$ will become arbitrarily large over
time except for some degenerate cases. In this paper, we show that the
introduction of dead zone is a crucial modification that has very
desirable properties. We show the scaling parameters will remain
bounded. In the following, we will formally prove the key characteristics of this protocol. Unfortunately, the introduction of
the deadzone makes the proofs quite tricky even though the simulations
will illustrate very nice behavior.

We have the following theorem.

\begin{theorem}\label{theorem}
  Consider a MAS \eqref{agent-g-noise} with associated network
  communication \eqref{zetanoise} and a given parameter $\delta>0$
  satisfying Assumption \ref{ass}. The \textbf{scalable {\boldmath
      $\delta$}-level-coherent state synchronization in the presence
    of bounded external disturbances/noises} as stated in problem
  \ref{prob_x} is solvable. In particular, protocol \eqref{protocol}
  with any $d$ satisfying \eqref{dchoice} solves
  $\delta$-level-coherent state synchronization in the presence of
  disturbances/noises $w_i$, for an arbitrary number of agents $N$ and
  for any graph $\mathscr{G}\in\mathbb{G}^N$.
\end{theorem}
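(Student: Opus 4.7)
My plan rests on two structural facts. First, Assumption~\ref{ass}.2 provides a matrix $F$ with $E=BF$, so that $Ew_j = B\bar{w}_j$ with $\bar{w}_j := Fw_j$ uniformly bounded by Assumption~\ref{ass}.3; the disturbance thus enters through the same channel as the control. Second, by construction $\rho_i$ is nondecreasing and freezes as soon as $\zeta_i\T P\zeta_i < d$. The Riccati identity $A\T P + PA = -I + PBB\T P$ from \eqref{eq-Riccati} will be the workhorse of every subsequent Lyapunov estimate.

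I would first derive the closed-loop equation for the observed signals,
\[
\dot\zeta_i = A\zeta_i + B\sum_{j=1}^N \ell_{ij}\bigl(-\rho_j B\T P\zeta_j + \bar{w}_j\bigr),
\]
and then attack the core technical step: proving uniform boundedness of every $\rho_i$ on $[0,\infty)$. Since $\rho_i$ is monotone, unboundedness would force $\int_{0}^{\infty}\mathbf{1}_{\{\zeta_i\T P\zeta_i\ge d\}}\|B\T P\zeta_i\|^2\,dt=\infty$. To rule this out I would consider an aggregated Lyapunov candidate of the form
\[
V(t) = \sum_{i=1}^N \mu_i\,\zeta_i\T P\zeta_i + \sum_{i=1}^N \alpha_i(\rho_i-\rho^{\star})^2,
\]
with weights $\mu_i$ reflecting a positive left eigenvector associated with the non-zero spectrum of $L$, a sufficiently large hypothetical constant $\rho^{\star}$, and weights $\alpha_i$ chosen so that the coupling cross-term $-2\sum_{i,j}\mu_i\zeta_i\T PB\ell_{ij}\rho_j B\T P\zeta_j$ is absorbed by the Riccati-generated $-\sum_i\mu_i\|\zeta_i\|^2$ together with the $\rho$-adaptation contributions, while the disturbance cross-term is dominated by Young's inequality applied against the bounded quantity $\|\bar{w}\|_\infty$. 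The deadzone is what saves the argument: precisely on the set where $\zeta_i\T P\zeta_i < d$ the adaptation is frozen so that $\alpha_i(\rho_i-\rho^{\star})^2$ contributes nothing to $\dot V$, and on the complementary ``active'' set one has enough quadratic damping from $\rho_i$ as soon as it exceeds a threshold.

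Once every $\rho_i$ is shown to be bounded, monotonicity gives $\rho_i\to\rho_i^\infty$ and $\dot\rho_i\to 0$. By the structure of the deadzone and continuity of $t\mapsto\zeta_i\T P\zeta_i(t)$, this forces $\zeta_i\T P\zeta_i(t)\le d < \bar{\delta}$ for all $t$ larger than some $T$; combined with $\zeta_i\T P\zeta_i\ge\lambda_{\min}(P)\|\zeta_i\|^2$ and the definition $\bar{\delta}=\delta^2\lambda_{\min}(P)$, we obtain $\|\zeta_i(t)\|\le\delta$ for all $t>T$ and all $i$, which is exactly $\delta$-level coherent state synchronization in the sense of Definition~\ref{delta-level}. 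Note that nothing in this last step depends on the network size or on the bound of $w_i$, giving the scale-freeness and independence claimed in Problem~\ref{prob_x}.

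The principal obstacle is unquestionably the boundedness of the gains $\rho_i$. The directed, non-symmetric Laplacian coupling combined with the deadzone makes the choice of weights $\mu_i$, $\alpha_i$ and threshold $\rho^{\star}$ delicate; naive choices that succeed in the undisturbed case fail under persistent forcing. I expect the subtle point will be ruling out a chattering regime in which $\zeta_i\T P\zeta_i$ crosses the deadzone infinitely often with an integrable but unbounded cumulative excess, while still driving $\rho_i$ to infinity. The left-Perron-type weighting $\mu_i$ associated with $L$ is, in my view, the key ingredient that compensates the asymmetry and closes the contradiction.
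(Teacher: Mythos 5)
Your two-stage skeleton --- first show every $\rho_i$ is bounded, then deduce the coherency bound --- is exactly the paper's (Lemmas \ref{lem2} and \ref{lem1} respectively), but both stages as you describe them contain genuine gaps.

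The final step is wrong as stated. From boundedness and monotonicity of $\rho_i$ you only get that $\int_0^\infty \dot\rho_i\,\mathrm{d}t<\infty$, i.e.\ $\|B\T P\zeta_i\|^2$ has finite integral over the set where $\zeta_i\T P\zeta_i\ge d$; this does \emph{not} force $\zeta_i\T P\zeta_i\le d$ for all large $t$. The trajectory can re-enter the region $\{\zeta_i\T P\zeta_i\ge d\}$ infinitely often, and $B\T P\zeta_i$ being small in an integral sense does not make $\zeta_i$ small. This is precisely why the paper imposes the strict gap $d<\bar{\delta}$ in \eqref{dchoice} and concludes only $\zeta_i\T P\zeta_i\le\bar{\delta}$, never $\le d$: Lemma \ref{lem1} first shows, via $\dot V_i\le-\gamma V_i+r_i\T r_i+r_i\T s_i$ with $r_i=B\T P\zeta_i$ and the key observation that $\int_{t_1}^{t_2} r_i\T r_i\,\mathrm{d}t=\rho_i(t_2)-\rho_i(t_1)<\eps$ on active intervals once $\rho_i$ has nearly converged, that $V_i$ must drop below $d$ at some time, and then that any subsequent excursion from level $d$ up to level $\bar{\delta}$ is impossible because it would either last too long (the $-\gamma d$ drift wins) or too short (the rise is at most $\eps+\sqrt{\eps}R<\bar{\delta}-d$). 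You need this excursion-height argument; continuity plus $\dot\rho_i\to0$ alone does not close the step.

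The boundedness of the gains is also not established by your plan. The candidate $V=\sum_i\mu_i\zeta_i\T P\zeta_i+\sum_i\alpha_i(\rho_i-\rho^{\star})^2$ with fixed left-eigenvector weights is the classical adaptive construction, and it is exactly the construction the paper warns against: the growing term $2\alpha_i(\rho_i-\rho^{\star})\dot\rho_i$ must be absorbed by the coupling term $-2\sum_{i,j}\mu_i\rho_j\zeta_i\T PB\,\ell_{ij}\,B\T P\zeta_j$, but for a directed, non-symmetric Laplacian with \emph{heterogeneous, time-varying} $\rho_j$ this quadratic form cannot be made sign-definite by any fixed choice of $\mu_i$, $\alpha_i$, $\rho^{\star}$; you name this as the principal obstacle but do not overcome it. The paper's Lemma \ref{lem2} proceeds quite differently: it argues by contradiction, isolates the subset of unbounded gains, introduces a time-varying permutation ordering the $\rho_i$ by magnitude, and works with Lyapunov functions $\tilde V_j$ weighted by the \emph{inverse} gain matrix $(\tilde{\rho}^j+\tilde{\rho}_N\textbf{1}_j\tilde{a}^j)^{-1}$, recursively proving that $\tilde{\rho}_j^2\tilde V_j$ remains bounded despite the reordering discontinuities; this forces the corresponding disagreement states to become so small that the adaptation freezes, contradicting unboundedness. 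The inverse-gain weighting, the recursive descent over the ordered gains, and the bookkeeping of the permutation discontinuities are the missing ingredients; without them (or a genuinely new substitute) the central claim of the theorem remains unproven.
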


\begin{remark}\label{rem2}
  It is easy to show that measurement noise that converges to zero
  asymptotically will not affect synchronization since, eventually, it
  will be arbitrarily small.  However, if we have arbitrary
  measurement noise that is bounded, for instance, $|v_i(t)|<V$, then
  it is very obvious that one can never guarantee synchronization with
  an accuracy of less than $2V$. Bounded measurement noise (with upper
  bound $V$) in our argument will impose a lower bound on $d$ (and
  hence on $\delta$) to avoid that in the worst case the scaling
  parameter will converge to infinity.  This will create completely
  different arguments and results. In particular, we aimed to
  find protocols that do not depend on a specific upper bound for the
  disturbances. With measurement noise, both the choice of $d$ in the
  protocol as well as the accuracy of our synchronization will depend
  on the upper bound $V$.
\end{remark}

\begin{proof}
  The Laplacian matrix of the system in general has the form
  \eqref{Lstruc}. We note that if we look at the dynamics of the
  agents belonging to one of the basic bi-components then these
  dynamics are not influenced by the other agents and hence can be
  analyzed independent of the rest of network. The network within one
  of the basic bi-components is strongly connected and we can apply
  Lemmas \ref{lem2a} and \ref{lem2} to guarantee that the $\rho_i$
  associated to basic bi-components are all bounded.

  Next, we look at the full network again. We have already established
  that the $\rho_i$ associated to basic bi-components are all
  bounded. Then we can again apply Lemmas \ref{lem2a} and \ref{lem2}
  to conclude that the other $\rho_i$ not associated to basic
  bi-components are also bounded.

  After having established that all $\rho_i$ are bounded, we
  can then apply Lemma \ref{lem1} to conclude that we achieve scalable
  $\delta$-level-coherent state synchronization.
\end{proof}

\begin{lemma}\label{lem1}
  Consider a number of agents $N$ and a graph
  $\mathscr{G}\in\mathbb{G}^N$. Consider MAS \eqref{agent-g-noise} with
  associated network communication \eqref{zetanoise} and a given
  parameter $\delta>0$. Assume Assumption \ref{ass} is satisfied.
  Choose any $d$ satisfying \eqref{dchoice}. If all $\rho_i$ remain
  bounded then there exists a $T>0$ such that
  \begin{equation}\label{probdef}
    \zeta_i\T(t) P \zeta_i(t) \leq \bar{\delta},
  \end{equation}
  for all $t>T$ and for all $i=1,\ldots, N$.
\end{lemma}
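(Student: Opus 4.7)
The plan is to combine the monotone convergence of $\rho_i$ with the Riccati structure to force $V_i := \zeta_i\T P \zeta_i$ below $\bar{\delta}$ asymptotically. Since $\dot\rho_i \geq 0$ and $\rho_i$ is bounded by hypothesis, $\rho_i(t)\to\rho_i^{\star}$ for some $\rho_i^{\star}<\infty$, and integrating the protocol gives
\[
\int_0^{\infty}\|B\T P\zeta_i(s)\|^2\,\chi_i(s)\,ds \;=\; \rho_i^{\star}-\rho_i(0) \;<\;\infty,
\]
where $\chi_i(s)=1$ on $\{V_i(s)\geq d\}$ and zero otherwise; in particular the tail of this integral vanishes as the lower limit tends to infinity.

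Next I would differentiate $V_i$ along the closed loop. Writing $E=BE_0$ (using $\im E\subset\im B$), setting $y_i:=B\T P\zeta_i$, and substituting $A\T P+PA=PBB\T P-I$ from \eqref{eq-Riccati}, one obtains
\[
\dot V_i \;=\; -\|\zeta_i\|^2+\|y_i\|^2 - 2\sum_j \ell_{ij}\rho_j\, y_i\T y_j + 2\, y_i\T E_0 \sum_j \ell_{ij} w_j,
\]
so that every term beyond the leading $-\|\zeta_i\|^2$ is linear or bilinear in the $y_i$'s, modulated by the bounded quantities $\rho_j$ and $w_j$.

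The main obstacle is to establish a uniform-in-$t$ bound on the stacked state $\bar\zeta:=\col(\zeta_1,\ldots,\zeta_N)$: the closed-loop coupling $(I\otimes A)-(LD_\rho\otimes BB\T P)$ is time-varying through $\rho_j(t)$ and $L$ is not symmetric, so no generic quadratic Lyapunov candidate yields a sign-definite derivative. The route I would pursue is a weighted candidate $W=\bar\zeta\T(D\otimes P)\bar\zeta$ with $D$ chosen from a left eigenstructure of $L$ adapted to the directed-spanning-tree hypothesis of Assumption \ref{ass}, combined with a bootstrap that absorbs the residual cross-coupling terms using the $L^1$ bound on $\sum_i\|y_i\|^2\chi_i$ together with $\|w_i\|_\infty<\infty$.

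Once $\bar\zeta$, and hence each $y_i$, is uniformly bounded, a Barbalat-type argument applied on sliding intervals of fixed length (to sidestep the discontinuity of $\chi_i$) delivers $\|y_i(t)\|\chi_i(t)\to 0$. I would then conclude by contradiction: if $V_{i_0}(t_n)>\bar\delta$ along some $t_n\to\infty$, then $\chi_{i_0}(t_n)=1$ forces $y_{i_0}(t_n)\to 0$; substituting into the expression for $\dot V_{i_0}$ and using boundedness of $y_j$, $\rho_j$, and $w_j$, every term except $-\|\zeta_{i_0}\|^2\leq -V_{i_0}/\lambda_{\max}(P)\leq -\bar\delta/\lambda_{\max}(P)$ becomes $o(1)$, giving $\dot V_{i_0}(t_n)\leq -\bar\delta/(2\lambda_{\max}(P))$ for large $n$. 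This uniform negative rate terminates every excursion $\{V_{i_0}\geq\bar\delta\}$ in bounded time, yielding the required $T$.
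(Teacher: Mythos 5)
Your opening observation --- that boundedness and monotonicity of $\rho_i$ give $\int_0^\infty \|B\T P\zeta_i\|^2\chi_i\,ds=\rho_i^\star-\rho_i(0)<\infty$ with vanishing tail --- and your Riccati-based computation of $\dot V_i$ are exactly the two ingredients the paper's proof is built on. The gap is the step you yourself flag as ``the main obstacle'': uniform boundedness of the stacked state. The route you sketch does not close. A weighted candidate $W=\bar\zeta\T(D\otimes P)\bar\zeta$ with $D$ a positive diagonal built from the left eigenstructure of $L$ produces a coupling term governed by $D\bar L D_\rho+D_\rho\bar L\T D$ with $D_\rho=\diag(\rho_i(t))$; for a directed graph this matrix is not sign-definite for an arbitrary positive diagonal $D_\rho$ (the Qu-type result gives $D\bar L+\bar L\T D>0$, and inserting a second, time-varying diagonal destroys it --- this is precisely the difficulty the paper only confronts in Lemma \ref{lem2}, via the $\left(\tilde\rho^j+\tilde\rho_N\mathbf{1}\tilde a^j\right)^{-1}$ weighting). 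The alternative ``bootstrap'' of moving the feedback into the forcing also fails: splitting $B\T P\zeta_i$ into an $L_\infty$ part (where $\chi_i=0$, so $\|B\T P\zeta_i\|^2$ is bounded by a multiple of $d$) and an $L_2$ part (where $\chi_i=1$) leaves $\dot{\bar\zeta}=(I\otimes A)\bar\zeta+\text{(bounded)}+\text{(finite energy)}$, and $A$ is only stabilizable, not Hurwitz, so no bound on $\bar\zeta$ follows. Since your Barbalat step and the final contradiction both require uniform continuity of $y_{i_0}$ and $V_{i_0}$, hence a bound on $\dot\zeta_{i_0}$, hence on $\zeta_{i_0}$ itself, the entire second half of your argument hangs on this unproved bound.

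The paper sidesteps the issue and never bounds $\zeta$ directly. From boundedness of $\rho$ and $w$ it gets the growth bound $\|\dot\zeta^{N-1}\|\le\alpha\|\zeta^{N-1}\|+\beta$ in \eqref{barxt5}, and argues that a large value of $(I\otimes B\T P)\zeta^{N-1}$ would persist long enough to force a substantial increase of some $\rho_i$, contradicting convergence of the $\rho_i$; this yields the bound \eqref{M1b} on $B\T P\zeta$ only, which is all that is needed to bound the forcing term $s_i$ in $\dot V_i\le-\gamma V_i+r_i\T r_i+r_i\T s_i$. The endgame is then not pointwise (no Barbalat): on any interval where $V_i\ge d$ one has $\int_{t_1}^{t_2}r_i\T r_i\,\mathrm{d}t=\rho_i(t_2)-\rho_i(t_1)<\eps$, and Cauchy--Schwarz gives the explicit excursion estimate $V_i(t_2)-V_i(t_1)\le-\gamma d\,(t_2-t_1)+\eps+\sqrt{\eps}R\sqrt{t_2-t_1}$, which with the choice of $\eps$ in \eqref{eps} both drives $V_i$ below $d$ and forbids any later climb from $d$ up to $\bar\delta$. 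If you want to salvage your plan, replace your step on boundedness of $\bar\zeta$ by a persistence argument of this type (or prove \eqref{M1b} by some other means); your final contradiction, which only needs a negative derivative above level $\bar\delta$, could then be retained, though you would still have to rule out recurrent crossings of $\bar\delta$, which the paper's $d$-to-$\bar\delta$ excursion bound handles explicitly.
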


\begin{proof}[Proof of Lemma \ref{lem1}]
  We define
  \[
    x=\begin{pmatrix} x_1 \\ \vdots \\ x_N \end{pmatrix},\quad
    w=\begin{pmatrix} w_1 \\ \vdots \\ w_N \end{pmatrix},
  \]
  and
  \[
    \rho=\begin{pmatrix}
      \rho_1 & 0      & \cdots & 0 \\
      0      & \rho_2 & \ddots & \vdots \\
      \vdots & \ddots & \ddots & 0 \\
      0      & \cdots & 0      & \rho_N
    \end{pmatrix}.
  \]
  We obtain
  \begin{equation}\label{syscomp}
    \dot{x}=(I\otimes A)x-(\rho L\otimes BB\T P)x + (I\otimes E)w.
  \end{equation}
  Next, we use
  \[
    \zeta_i = (L_i \otimes I)x,
  \]
  where $L_i$ is the $i^{\text{th}}$ row of $L$ for $i=1,\ldots, N$.
  We obtain
  \begin{equation}\label{barxt3}
    \zeta = (L \otimes I) x,
  \end{equation}
  where
  \[
    \zeta=\begin{pmatrix} \zeta_1 \\ \vdots \\ \zeta_N \end{pmatrix}.
  \]
  By using \eqref{syscomp}, we obtain
  \begin{equation}\label{barxt4}
    \dot{\zeta} = (I\otimes A)\zeta
    -[L\rho \otimes BB\T
    P]\zeta + (L\otimes E)w. 
  \end{equation}
  Convergence of $\rho_i$ for $i=1,\ldots, N$ implies that for any
  $\eps>0$ there exists $T>T_0$ such that
  \begin{equation}\label{rhoeps}
    \rho_i(t_2)-\rho_i(t_1) < \eps,
  \end{equation}
  for all $t_2>t_1>T$, and for all $i=1,\ldots, N$. This implies
  \[
    B\T P \zeta_i = p_i+q_i
  \]
  with
  \begin{equation}\label{epsM}
    \int_T^\infty p_i\T p_i \textrm{d}t <\eps,\qquad \| q_i \| \leq M,
  \end{equation}
  where $p_i(t)=B\T P\zeta_i$ and $q_i(t)=0$ if $\zeta_i\T(t)P\zeta_i(t)>d$ and
  $p_i(t)=0$ and $q_i(t)=B\T P\zeta_i$ if
  $\zeta_i\T(t)P\zeta_i(t)>d$. The bounds in \eqref{epsM} then follow
  from \eqref{rhoeps} and our protocol \eqref{protocol}.
  
  For any $i\in \{1,\ldots, N\}$, from \eqref{barxt4} we have that
  \begin{equation*}
    \dot{\zeta}_i = A\zeta_i
    -[L_i\rho \otimes BB\T P]\zeta + (L_i\otimes E)w. 
  \end{equation*} 
  Define
  \[
    V_i = \zeta_i\T P \zeta_i,
  \]
  then we obtain
  \begin{equation}\label{barxt7}
    \dot{V}_i = -\zeta_i\T \zeta_i + \zeta_i\T PBB\T P \zeta_i
    -2\rho_i \zeta_i\T \left[ L_{i} \otimes PBB\T P
    \right] \zeta+ 2\zeta_i\T (L_i \otimes PE) w. 
  \end{equation}
  Since $w$ is bounded and we have the bound \eqref{epsM}, we
  find that there exists a constant $R$ such that $\| s_i(t) \| < R$
  for all $t>T_0$, where
  \begin{equation}\label{si}
    s_i =-2(L_i \rho \otimes I)\begin{pmatrix} q_1 \\ \vdots \\
        q_N \end{pmatrix} +2(L_i \otimes X) w.
  \end{equation}
  Note that in \eqref{si} we used Assumption \ref{ass} implying that
  there exists a matrix $X$ such that $E=BX$. We also define
  \[
    r_i = B\T P \zeta_i,\qquad v_i = -2 \rho_i(L_i\otimes
    I) \begin{pmatrix} p_1 \\ \vdots \\ p_N \end{pmatrix}.
  \]
  Note that there exists $M_2$ such that
  \begin{equation}\label{M22}
    \int_T^\infty v_i\T (t) v_i(t) \textrm{d}t \leq M_2^2  \eps
  \end{equation}
  using \eqref{epsM}.
  We obtain from \eqref{barxt7} that
  \begin{equation}\label{barxt8}
    \dot{V}_i \leq -\eta V_i + r_i\T r_i + r_i\T v_i + r_i\T s_i,
  \end{equation}
  where $\eta= \lambda_{\max}(P)^{-1}$. Choose $\eps>0$ such that
  \begin{equation}\label{eps}
    R\sqrt{\eps}\leq \eta d,\qquad
    (M_2+1)\eps+\sqrt{\eps}R < \bar{\delta} -d.
  \end{equation}
 Next, assume $V_i >d$
  in the interval $[t_1,t_2]$ with $t_1>t_2>T$. In that case
  \[
    \int_{t_1}^{t_2}\, r_i\T(t) r_i(t)\, \textrm{d}t  =
    \rho_i(t_2)-\rho_i(t_1) < \eps 
  \]
  which implies
  \begin{equation}\label{barxt9}
    V_i(t_2)-V_i(t_1) \leq -\eta d(t_2-t_1) + (M_2+1)\eps +
    \sqrt{\eps}R \sqrt{t_2-t_1},
  \end{equation}
  where we used \eqref{M22}, \eqref{barxt8} and the fact that
  \begin{align*}
    \int_{t_1}^{t_2}\, r_i\T(t) s_i(t)\, \textrm{d}t
    &\leq \left( \int_{t_1}^{t_2}\, r_i\T(t) r_i(t)\, \textrm{d}t\right)^{1/2}
    \left( \int_{t_1}^{t_2}\, s_i\T(t) s_i(t)\, \textrm{d}t
    \right)^{1/2},\\
    \int_{t_1}^{t_2}\, r_i\T(t) v_i(t)\, \textrm{d}t
    &\leq \left( \int_{t_1}^{t_2}\, r_i\T(t) r_i(t)\, \textrm{d}t\right)^{1/2}
    \left( \int_{t_1}^{t_2}\, v_i\T(t) v_i(t)\, \textrm{d}t
    \right)^{1/2}.
  \end{align*}
  From \eqref{barxt9}, it is clear that if we increase $t_2$ (and keep
  $t_1$ fixed) then there will be a moment that $V_i(t_2)$ becomes
  negative which yields a contradiction. Therefore, there exists some
  $T_1>t_1>T$ for which $V_i(T_1)\leq d$. Next, we prove that for all
  $t>T_1$ we will have that $V_i(t)<\bar{\delta}$. We will show this
  by contradiction. If $ V_i(t)\geq \bar{\delta}$ for some $t>T_1$
  while $V_i(T_1)<d$, then there must exist $t_4>t_3>T_1$ such that
  \[
    V_i(t_3)=d \text{ and } V_i(t_4)=\bar{\delta} \text{ while }
    V_i(t)\geq d \text{ for } 
    t\in [t_3,t_4].
  \]
  Similar to \eqref{barxt9}, we get
  \[
    V_i(t_4)-V_i(t_3) \leq -\eta d(t_4-t_3) + (M_2+1)\eps + \sqrt{\eps}R
    \sqrt{t_4-t_3}.
  \]
  If $t_4-t_3>1$, then this implies
  \begin{equation*}
    \bar{\delta} -d = V_i(t_4)-V_i(t_3) \leq -\eta d(t_4-t_3) + (M_2+1)\eps
    +\sqrt{\eps}R (t_4-t_3) \leq (M_2+1)\eps < \bar{\delta} -d 
  \end{equation*}
  using \eqref{eps} which yields a contradiction. On the other hand,
  if $t_4-t_3\leq 1$ then we obtain
  \begin{equation*}
    \bar{\delta} -d = V_i(t_4)-V_i(t_3) \leq  (M_2+1)\eps +
    \sqrt{\eps}R \sqrt{t_4-t_3} \leq (M_2+1)\eps+\sqrt{\eps} R < \bar{\delta} -d
  \end{equation*}
  which also yields a contradiction. In this way, we can show for any
  $i\in \{ 1,\ldots,N\}$, \eqref{probdef} is satisfied for sufficiently large $t$.
\end{proof}   

In the first Lemma, we showed that if all the adaptive parameters
remain bounded then we obtain our desired result. We first establish
that these parameters can at most grow linearly over time. This is
then used in a later lemma to prove that the adaptive parameters are
actually bounded and we can use Lemma \ref{lem1} to establish synchronization.

\begin{lemma}\label{lem2a}
  Consider MAS \eqref{agent-g-noise} with associated network
  communication \eqref{zetanoise} and protocol \eqref{protocol}.
  Assume Assumption \ref{ass} is satisfied. Additionally, assume that
  either the $\rho_i$ associated to agents belonging to the basic
  bi-components are bounded or the graph is strongly
  connected. In that case, there exists $\eta,\mu >0$ such that
  \[
    \rho_i (t_2) -\rho_i(t_1) < \eta (t_2-t_1) + \mu,
  \]
  for any $t_1,t_2$ with $t_2>t_1$, and for $i=1,\ldots,N$. 
\end{lemma}

\begin{proof}  
  Using the notation of Lemma \ref{lem1} we obtain \eqref{syscomp}.  We
  first consider the case that $\rho_i$ is unbounded for $i\leq k$
  while $\rho_i$ is bounded for $i>k$ with $k<N$. We have
  \begin{equation*}
    L = \begin{pmatrix}
      L_{11} & L_{12} \\ L_{21} & L_{22} 
    \end{pmatrix},\quad
    x^k =\begin{pmatrix} x_1 \\ \vdots \\ x_k \end{pmatrix}, \quad
    x^k_c =\begin{pmatrix} x_{k+1} \\ \vdots \\
      x_{N} \end{pmatrix},\quad
    \zeta^k =\begin{pmatrix} \zeta_{1} \\ \vdots \\
      \zeta_{k} \end{pmatrix},\quad
    \zeta^k_c =\begin{pmatrix} \zeta_{k+1} \\ \vdots \\
      \zeta_{N} \end{pmatrix},
  \end{equation*}
  with $L_{11}\in \R^{k\times k}$. If all the agents associated to
  basic bi-components have a bounded $\rho_i$ this implies that agents
  associated to $i=1,\ldots,k$ are not associated to basic
  bi-components which implies that $L_{11}$ is invertible. On the other
  hand, if the network is strongly connected we always have that $L_{11}$
  is invertible since $k<N$. There exist $s^k_c\in L_\infty$ and
  $v^k_c\in L_2$ with
  \begin{equation}\label{K1K2}
    \| s^k_c \|_\infty < K_1,\quad \| v^k_c \|_2 < K_2
  \end{equation}
  for suitable chosen $K_1$ and $K_2$ such that
  \[
    (I \otimes B\T P) \zeta^k_c = \begin{pmatrix} B\T P \zeta_{k+1} \\ 
      \vdots \\ B\T P \zeta_{N} \end{pmatrix} =
    \begin{pmatrix} v_{k+1} \\ \vdots \\ v_{N} \end{pmatrix} +
    \begin{pmatrix} s_{k+1} \\ \vdots \\ \bar{s}_{N-k} \end{pmatrix} 
    = v^k_c+s^k_c.
  \]
  This is easily achieved by setting $v_i(t)=0$
  and $s_i(t)=B\T P \zeta_i(t)$ if
  $\zeta_i\T(t) P \zeta_i(t) <d$, while for
  $\zeta_i\T(t) P \zeta_i(t) \geq d$ we set
  $v_i(t)=B\T P \zeta_i(t)$ and $s_i(t)=0$.  It is obvious
  that this construction yields that $s_i\in L_\infty$ while the
  fact that the $\rho_i$ are bounded for $i=k+1,\ldots, N$ implies that
  $v_i\in L_2$ (note that $\dot{\rho}_i = v_i\T v_i$
  in this construction). We define
  \[
    \hat{x}^k= x^k+(L_{11}^{-1}L_{12}\otimes I) x^k_c.
  \] 
  Using \eqref{syscomp}, we then obtain
  \begin{equation}\label{barxt2xxx}
    \dot{\hat{x}}^k = (I\otimes A)\hat{x}^k
    -[\rho^{k} L_{11} \otimes BB\T
    P]\hat{x}^k  -\left[L_{11}^{-1}L_{12}\rho^k_c
      \otimes B\right](s^k_c+v^k_c)+
    [{\setlength{\arraycolsep}{2mm}\begin{pmatrix} I 
      & L_{11}^{-1}L_{12} \end{pmatrix}} \otimes BX]w 
  \end{equation}
  where we used, as before, that $E=BX$ while
  \[
    \rho^k=\begin{pmatrix}
      \rho_1 & 0      & \cdots & 0 \\
      0      & \rho_2 & \ddots & \vdots \\
      \vdots & \ddots & \ddots & 0 \\
      0      & \cdots & 0      & \rho_k
    \end{pmatrix},\quad
    \rho^k_c=\begin{pmatrix}
      \rho_{k+1} & 0      & \cdots & 0 \\
      0      & \rho_{k+2} & \ddots & \vdots \\
      \vdots & \ddots & \ddots & 0 \\
      0      & \cdots & 0      & \rho_N
    \end{pmatrix}.
  \]
  Define
  \begin{align*}
    \hat{s}^k &=-(L_{11}^{-1}L_{12}\rho^k_c \otimes I)s^k_c
    +\left[{\setlength{\arraycolsep}{2mm}\begin{pmatrix} I &
        L_{11}^{-1}L_{12} \end{pmatrix}} \otimes X\right]w,\\ 
    \hat{v}^k &=-(L_{11}^{-1}L_{12}\rho^k_c \otimes I)v^k_c , 
  \end{align*}
  then \eqref{K1K2} in combination with the boundedness of
  $\rho^k_c$ implies that there exists $K_3$ and
  $K_4$ such that
  \begin{equation}\label{K3K4}
    \| \hat{s}^k \|_\infty < K_3,\quad \| \hat{v}^k \|_2 < K_4.
  \end{equation}
  We obtain
  \begin{equation}\label{barxt2xxxy}
    \dot{\hat{x}}^k = (I\otimes A)\hat{x}^k
    -[\rho^k L_{11} \otimes BB\T
    P]\hat{x}^k+[I \otimes B](\hat{s}^k+\hat{v}^k)
  \end{equation}
  and we define
  \begin{equation}\label{Vj}
    V_k= (\hat{x}^k)\T (\rho^{-k} H^k \otimes P) \hat{x}^k,
  \end{equation}
  with $\rho^{-k}=(\rho^k)^{-1}$
  while
  \begin{equation}\label{HN}
    H^k=\begin{pmatrix}
      \alpha_1 & 0      & \cdots & 0 \\
      0      & \alpha_2 & \ddots & \vdots \\
      \vdots & \ddots & \ddots & 0 \\
      0      & \cdots & 0      & \alpha_k
    \end{pmatrix}.
  \end{equation}
  By \cite[Theorem 4.25]{qu-book-2009} we can choose
  $\alpha_1,\ldots,\alpha_k>0$ such that $H^k L +L\T H^k > 0$. It is
  easily seen that this implies that there exists a $\gamma$ such that
  \begin{equation}\label{HkL11}
    H^kL_{11}+L_{11}\T H^k > 3\gamma L_{11}\T L_{11}.
  \end{equation}
  We get from \eqref{barxt2xxxy} that
  \[
    \dot{V}_k \leq (\hat{x}^k)\T \left[ \rho^{-k} H^k \otimes (-I+PBB\T P) \right]
    \hat{x}^k-(\hat{x}^k)\T \left[
      (H^kL_{11}+L_{11}\T H^k)
      \otimes PBB\T P \right] \hat{x}^k \\
    + 2(\hat{x}^k)\T \left[ \rho^{-k} H^k \otimes PB\right] (\hat{s}^k+\hat{v}^k),
  \]
  where we used that $V_k$ is decreasing
  in $\rho_i$ for $i=1,\ldots, k$. The above yields for $t>T$ that
  \begin{equation*}
    \dot{V}_k \leq -V_k 
    - 2\gamma (\hat{x}^k)\T \left[ L_{11}\T L_{11} \otimes PBB\T P \right] \hat{x}^k \\
    + 2(\hat{x}^k)\T \left[\rho^{-k} H^k \otimes PB\right] (\hat{s}^k+\hat{v}^k),
  \end{equation*}
  provided $T$ is such that
  \[
    \rho^{-k} H^k  < \gamma I,
  \]
  for $t>T$ which is possible since we have
  $\rho_i\rightarrow \infty$ for $i=1,\ldots, k$.  Note that given \eqref{K3K4}
  there exists some fixed $\alpha$ such that
  \begin{equation}\label{lastone4xx}
    \int_{T}^\infty\, \| \check{s}^k(t) \|^2\, \textrm{d}t   \leq \alpha,\qquad
    \sup_{t\in[T,\infty)} \| \check{v}^k(t) \| \leq \alpha,
  \end{equation}
  with
  \begin{align*}
    \check{s}^k &= \tfrac{1}{\sqrt{\gamma}}\left[ L_{11}^{-1}
      \rho^{-k}H^k \otimes I\right] \hat{s}^k,\\ 
    \check{v}^k &= \tfrac{1}{\sqrt{\gamma}}\left[ L_{11}^{-1}
      \rho^{-k}H^k \otimes I\right]\hat{v}^k. 
  \end{align*}
  We get
  \begin{equation}\label{lastone2xx}
    \dot{V}_k \leq - V_k 
    - \gamma (\hat{x}^k)\T \left[ L_{11}\T L_{11} \otimes PBB\T P \right] \hat{x}^k 
    + (\check{s}^k)\T \check{s}^k + (\check{v}^k)\T\check{v}^k,
  \end{equation}
  for $t>T$. Moreover, 
  \begin{equation}\label{rfgt}
     (\hat{x}^k)\T \left[ L_{11}\T L_{11} \otimes PBB\T
      P \right] \hat{x}^k \geq  \sum_{i=1}^k \dot{\rho}_i,
  \end{equation}
  since $(L_{11}\otimes I)\hat{x}^k=\zeta^k$. Hence \eqref{lastone2xx} implies
  \begin{equation}\label{ttg1}
    \dot{V}_k \leq -V_k -\gamma
    \sum_{i=1}^k \dot{\rho}_{i} + (\check{s}^k)\T \check{s}^k +
    (\check{v}^k)\T\check{v}^k.  
  \end{equation} 
  Note that the bounds in \eqref{lastone4xx} combined with the
  inequality \eqref{ttg1} for $t>T$ implies that there exists some
  $\eta,\mu>0$ such that
  \begin{equation}\label{ttg4}
    \tilde{\rho}_i(t_2)-\tilde{\rho}_i(t_1) <\eta (t_2-t_1) + \mu,
  \end{equation}
  for $i=1,\ldots, k$ and all $t_2,t_1>T$. Clearly, if
  $\rho_{k+1},\ldots,\rho_N$ are all bounded we trivially obtain \eqref{ttg4} for
  $i=k+1,\ldots,N$. 

  If \eqref{ttg4} is satisfied for $t_1,t_2>T$, then clearly the lemma immediately
  follows for $t_1,t_2>T$. Since for $t<T$ all signals are bounded it
  is clear that for $t<T$ the $\rho_i$ can grow at most linearly and
  hence we can obtain the result for all $t_1,t_2>0$.

  Next, we consider the case that all $\rho_i$ are unbounded. In this
  case, we assumed the graph is strongly connected and hence by Lemma
  \ref{2.8} presented in the appendix there exists $\alpha_1,\ldots,\alpha_N>0$ such that
  \eqref{Hlyap} is satisfied with $H^N$ given by \eqref{HN} for $k=N$.
  We define
  \begin{equation}\label{VN}
    V_N= x\T \left[ Q_{\rho} \otimes P 
    \right] x,
  \end{equation}
  where
  \begin{equation}\label{Qrho}
    Q_{\rho} = \rho^{-N} \left( H^N\rho^N - \mu_N
      \textbf{h}_N\textbf{h}_N\T \right) \rho^{-N}
  \end{equation}
  with $\rho^{-N}=(\rho^N)^{-1}$, while
  \[
    \mu_N=\frac{1}{\sum_{i=1}^N \alpha_i\rho_i^{-1}},\qquad
    \textbf{h}_N = \begin{pmatrix} \alpha_1 \\ \vdots \\ \alpha_N \end{pmatrix}.
  \]
  From Lemma \ref{2.9} in the appendix, we know that $Q_{\rho}$ is
  decreasing in $\rho_i$ for $i=1,\ldots N$.  Note that
  $Q_\rho \rho L=H^N L$.  We get from \eqref{syscomp} that
  \begin{equation}\label{eqref}
    \dot{V}_N \leq x\T \left[ Q_\rho \otimes (-I+PBB\T P) \right] x
    -x\T \left[ (H^NL+L\T H^N) \otimes PBB\T P \right] x 
    + 2x\T \left[ Q_{\rho} \otimes PB\right] Xw,
  \end{equation}
  where we use again that there exists $X$ such that $E=BX$. It is
  easily verified that $Q_{\rho}\textbf{1}=0$. Moreover
  $\text{Ker} L = \Span \{ \textbf{1} \}$ the network is strongly
  connected. Therefore
  \[
    \text{Ker} Q_{\rho} \subset \text{Ker} L\T L.
  \]
  Together with the fact that $\rho_j\rightarrow \infty$ for
  $j=1,\ldots, N$ and therefore $Q_{\rho}\rightarrow 0$ this implies
  that there exists $T$ such that
  \begin{equation}\label{Qrhobound}
    Q_{\rho} < \gamma L\T L,
  \end{equation}
  is satisfied for $t>T$. 

  The above together with \eqref{Hlyap} yields for $t>T$ that
  \begin{equation*}
    \dot{V}_N \leq -V_N
    - 2\gamma x\T \left[ L\T L \otimes PBB\T P \right] x \\
    + 2 x\T \left[ Q_{\rho} \otimes PBX\right] w.
  \end{equation*}
  Note that
  \[
    (I-\tfrac{1}{N} \textbf{1}\textbf{1}\T)Q_\rho = Q_\rho.
  \]
  Choose $\nu$ such that
  \[
    (I-\tfrac{1}{N} \textbf{1}\textbf{1}\T) \leq \nu \gamma L\T L,
  \]
  then, we obtain
  \[
    2 x\T \left[ Q_{\rho} \otimes PBX\right] w
    \leq \gamma x\T \left[ L\T L \otimes PBB\T P \right] x
    + (\check{v}^N)\T \check{v}^N
  \]
  with
  \begin{equation}\label{checkv}
    \check{v}^N = \sqrt{\nu} \left[  Q_\rho \otimes X\right]w.
  \end{equation}
  Note that since $w$ is bounded, there exists some fixed $\alpha$ such
  that
  \begin{equation}\label{lastone4xxN}
    \sup_{t\in[T,\infty)} \| \check{v}^N(t) \| \leq \alpha.
  \end{equation}
  We get
  \begin{equation}\label{lastone2xxN}
    \dot{V}_N \leq - V_N
    - \gamma x\T \left[ L\T L \otimes PBB\T P \right] x +  (\check{v}^N)\T\check{v}^N,
  \end{equation}
  for $t>T$. Moreover,
  \[
    x\T \left[ L\T L \otimes PBB\T
      P \right] x  \geq  \sum_{i=1}^N \dot{\rho}_i.
  \]
  The above implies
  \begin{equation}\label{ttg1N}
    \dot{V}_N \leq -V_N -\gamma
    \sum_{i=1}^N \dot{\rho}_{i} + 
    (\check{v}^N)\T\check{v}^N. 
  \end{equation} 
  Note that the bound in \eqref{lastone4xxN} combined with the
  inequality \eqref{ttg1N} for $t>T$ implies that there exists some
  $\eta>0$, and $\mu$ such that
  \begin{equation}\label{ttg4N}
    \tilde{\rho}_i(t_2)-\tilde{\rho}_i(t_1) <\eta(t_2-t_1)+\mu,
  \end{equation}
  for $i=1,\ldots, N$ and $t_2,t_1>T$. 

  If \eqref{ttg4N} is satisfied for $t_1,t_2>T$ then clearly the lemma immediately
  follows for $t_1,t_2>T$. Since for $t<T$ all signals are bounded it
  is clear that for $t<T$ the $\rho_i$ can grow at most linearly and
  hence we can obtain the result for all $t_1,t_2>0$.
\end{proof}

\begin{lemma}\label{lem2}
  Consider MAS \eqref{agent-g-noise} with associated network
  communication \eqref{zetanoise}, and the protocol \eqref{protocol}.
  Assume Assumption \ref{ass} is satisfied. Additionally, assume that
  either the $\rho_i$ associated with agents belonging to the basic
  bi-components are bounded or the graph is strongly connected. In that
  case, all $\rho_i$ remain bounded.
\end{lemma}

\begin{proof}[Proof of Lemma \ref{lem2}]
  We prove this result by contradiction. Let $k$ be such that
  $\rho_i$ is unbounded for $i\leq k$ while $\rho_i$ is bounded
  for $i>k$. Clearly, if all $\rho_i$ are unbounded then we have
  $k=N$.
  
  For each $s\in \N$, we define a time-dependent permutation $p$ of
  $\{ 1,\ldots,N\}$ such that
  \begin{equation}\label{perm}
    \rho_{p_s(1)}(s) \geq \rho_{p_s(2)}(s) \geq \rho_{p_s(3)}(s) \geq
    \cdots \geq \rho_{p_s(k)}(s),\quad p_s(i)=i \text{ for } i>k
  \end{equation}
  and we choose $p_t=p_s$ for $t\in[s,s+1)$. Note that Lemma
  \ref{lem2a} implies there exists $T>0$ such that $\tilde{\rho}_k(t)\leq 2 \tilde{\rho}_i(t)$
  for $t>T$ (recall that by construction $\tilde{\rho}_i(t)\rightarrow
  \infty$ for $t\rightarrow \infty$). We define
  \[
    \tilde{x}^i(t) = \hat{x}_{p_t(i)}(t),\qquad
    \tilde{\rho}_i(t) = \hat{\rho}_{p_t(i)}(t),\qquad
    \tilde{\alpha}_i(t) = \alpha_{p_t(i)}(t),\qquad
    \tilde{\zeta}_i(t) = \zeta_{p_t(i)}(t).
  \]
  For $k<N$, we set
  \begin{equation}\label{tildeVk}
    \tilde{V}_k= (\tilde{x}^k)\T (\tilde{\rho}^{-k} \tilde{H}^k \otimes
    P) \tilde{x}^k, 
  \end{equation}
  while for $k=N$ we have
  \begin{equation}\label{tildeVN}
    \tilde{V}_N= (\tilde{x}^N)\T \left[ \tilde{Q}_{\rho} \otimes P 
    \right] \tilde{x}^N,
  \end{equation}
  where
  \begin{equation}\label{tildeQrho}
    \tilde{Q}_{\rho} = \tilde{\rho}^{-N} \left(
      \tilde{H}^N\tilde{\rho}^N - \tilde{\mu}_N
      \tilde{\textbf{h}}_N\tilde{\textbf{h}}_N\T \right)
    \tilde{\rho}^{-N}. 
  \end{equation}  
  Here $\tilde{H}^k, \tilde{\textbf{h}}_N$ and $\tilde{\rho}^N$ are
  obtained from $H^k, \textbf{h}_N$ and $\rho^N$ using our
  permutation.
  
  If we assume $k<N$ then using the arguments of Lemma
  \ref{lem2a}, we obtain \eqref{barxt2xxxy} and the bound
  \eqref{K3K4}. We also obtained in Lemma
  \ref{lem2a} the bound \eqref{lastone2xx}.
  Using the permutation we
  introduced this immediately yields:
  \begin{equation}\label{lastone2}
    \dot{\tilde{V}}_k \leq - \tilde{V}_k 
    - \gamma (\tilde{x}^k)\T \left[ \tilde{L}_{11}\T \tilde{L}_{11}
      \otimes PBB\T P \right] 
    \tilde{x}^k    
    + (\tilde{\check{s}}^k)\T \tilde{\check{s}}^k +
    (\tilde{\check{v}}^k)\T\tilde{\check{v}}^k, 
  \end{equation}
  where $\tilde{\check{s}}^k, \tilde{\check{v}}^k,\tilde{L}_{11}$ are
  obtained from $\check{s}^k, \check{v}^k$ and $L_{11}$ by applying
  the  permutation introduced above.
  
  Note that there exists some fixed $\tilde{\alpha}$ such that
  \begin{equation}\label{lastone4}
    \| \tilde{\rho}_{k} \tilde{\check{s}} \|_2  \leq
    \tilde{\alpha},\qquad 
    \| \tilde{\rho}_{k} \tilde{\check{v}} \|_\infty \leq \tilde{\alpha},
  \end{equation}
  where we exploited that $\tilde{\rho}_k \leq 2\tilde{\rho}_i$ for $i=1,\ldots,
  k$ implies
  \[
    \tilde{\rho}_k \tilde{H}^k \tilde{\rho}^{-k}
  \]
  is bounded.

  On the other hand, for $k=N$, then using the arguments of Lemma \ref{lem2a}, we obtain
  \eqref{lastone2xxN}. Set $\check{w}^N=0$ and let $\check{v}^N$ be
  defined by \eqref{checkv}.  Using the permutation we introduced this
  immediately yields \eqref{lastone2} where
  $\tilde{\check{s}}, \tilde{\check{v}},\tilde{L}_{11}$ are obtained
  from $\check{s}, \check{v}$ and $L_{11}=L$ by applying the
  permutation introduced above.
  
  Next, we consider $k=N$. Note that $\tilde{\rho}_N \leq 2\tilde{\rho}_i$ for $i=1,\ldots
  N$ implies that
  \[
    \tilde{\rho}_N \tilde{Q}_\rho
  \]
  is bounded which yields that there exists some fixed
  $\tilde{\alpha}$ such that \eqref{lastone4} is satisfied
  
  We have established \eqref{lastone2} and\eqref{lastone4} for both
  $k<N$ and $k=N$. Equation \eqref{lastone2} implies
  \begin{equation}\label{lastone}
    \dot{\tilde{V}}_k \leq -\tilde{V}_k + \tilde{\check{s}}\T
    \tilde{\check{s}} + \tilde{\check{v}}\T\tilde{\check{v}}. 
  \end{equation}
  This clearly yields that $\tilde{V}_k$ is bounded given our bounds
  \eqref{lastone4}. Using inequality \eqref{ttg1} together with our
  permutation we obtain that, for $t>T_1$, we have
  \begin{align}
    \left[ \tilde{\rho}_{k}^2 \tilde{V}_k \right]' &\leq
    2\tilde{\rho}_{k} \dot{\tilde{\rho}}_{k} \tilde{V}_k -
    \tilde{\rho}_{k}^2 \tilde{V}_k -\gamma 
    \tilde{\rho}_{k}^2 \dot{\tilde{\rho}}_{k} + \tilde{\rho}_{k}^2 (\tilde{\check{s}}\T
    \tilde{\check{s}} + \tilde{\check{v}}\T\tilde{\check{v}}) \nonumber\\
    &\leq  - \tilde{\rho}_{k}^2 \tilde{V}_k + \tilde{\rho}_{k}^2 (\tilde{\check{s}}\T
    \tilde{\check{s}} + \tilde{\check{v}}\T\tilde{\check{v}}),  \label{lastone3}
  \end{align}
  where we choose $T_1>T$ such that
  $2\tilde{V}_k \leq \gamma \tilde{\rho}_{k}$ for $t>T_1$ which is
  obviously possible since $\tilde{\rho}_{k}$ increases to infinity
  while $\tilde{V}_k$, as argued before, is bounded. Then, using
  \eqref{lastone4} and \eqref{lastone3} we find
  \[
    \tilde{\rho}_{k}^2(s+\sigma) \tilde{V}_k(s+\sigma) < e^{-\sigma}
    \tilde{\rho}_{k}^2(s) \tilde{V}_k(s)  + 2\tilde{\alpha}^2, 
  \]
  for all $\sigma\in (0,1]$ and any $s\in \N$ with $s>T_1$. This by
  itself does not yield that $\tilde{\rho}_{k}^2 \tilde{V}_k$ is
  bounded because we have potential discontinuities for $s\in \N$
  because of the reordering process we introduced. Note that
  $\tilde{V}_k$ is not affected by the reordering but $\tilde{\rho}_k$
  can have discontinuities  for $s\in \N$. Hence
  \[
    \tilde{\rho}_{k}^2(s^+)  \text{ and }
    \tilde{\rho}_{k}^2(s^-) 
  \]
  might be different and, strictly speaking, we have obtained
  \begin{equation}\label{yhju}
    \tilde{\rho}_{k}^2(s+\sigma) \tilde{V}_k(s+\sigma) < e^{-\sigma}
    \tilde{\rho}_{k}^2(s^+) \tilde{V}_k(s)  + 2\tilde{\alpha}^2,
  \end{equation}  
  for all $\sigma\in (0,1]$ and any $s\in \N$ with $s>T_1$. 

  Given the bounds on the growth of $\rho_i$ obtained in Lemma
  \ref{lem2a}, it is easy to see that there exists $A_0>0$ such that a
  discontinuity can only occur when
  \[
    \tilde{\rho}_{k-1}(s)-\tilde{\rho}_{k}(s)<A_0,
  \]
  for $s$ sufficiently large. Using this bound, together with the fact
  that $\tilde{\rho}_k$ converges to infinity, we find that for any
  $\eps$ there exists $T_2>T_1$ such that
  \begin{equation}\label{bound2}
    \tilde{\rho}_k^2(s^+)  \leq (1+\eps)
    \tilde{\rho}_k^2(s^-),
  \end{equation}
  for $s>T_2$.  Therefore, we find that
  \begin{equation}\label{ttg3}
    \tilde{\rho}_{k}^2(t) \tilde{V}_k(t) 
  \end{equation}
  is bounded by combining \eqref{bound2} with \eqref{yhju} provided
  that we choose $(1+\eps)e^{-1} < 1$.
  
  In addition, \eqref{ttg1} implies
  \begin{equation}\label{ttg2}
    \dot{\tilde{V}}_k \leq -\tilde{V}_k-\gamma
    (\tilde{x}^k)\T \left[ \tilde{L}_{11}^2 \otimes PBB\T
      P \right] \tilde{x}^k + \tilde{\check{s}}\T \tilde{\check{s}} +
    \tilde{\check{v}}\T\tilde{\check{v}},
  \end{equation}
  which, combined with  \eqref{lastone4},
  implies that
  \begin{equation*}
    \gamma \int_{s}^{s+1} \tilde{\rho}_{k}^2 (\tilde{x}^k)\T \left[
      \tilde{L}_{11}^2 \otimes PBB\T P \right] \tilde{x}^k\,
    \textrm{d}t \leq\\
    4 \tilde{\rho}_{k}^2(s) \tilde{V}_k(s) + 4
    \int_{s}^{s+1} \tilde{\rho}_{k}^2
    (\tilde{\check{s}}\T\tilde{\check{s}}+\tilde{\check{v}}\T\tilde{\check{v}})
    \textrm{d}t\leq 4\tilde{\rho}_{k}^2(s) \tilde{V}_k(s) + 8\tilde{\alpha}^2
  \end{equation*}
  for all $s$ since $\tilde{\rho}_k(t)\leq 2\tilde{\rho}_k(s)$ for
  $t\in [s,s+1]$. Boundedness of \eqref{ttg3} then implies that
  \[
    \int_{s}^{s+1} \tilde{\rho}_{k}^2 (\tilde{\zeta}^k)\T (I\otimes PBB\T
    P) \tilde{\zeta}^k\, \textrm{d}t
  \]
  is bounded. 

  We have established that $\tilde{\rho}_k^2 \tilde{V}_k$ is bounded
  but this does not establish that the $\tilde{x}^k$ gets small since
  the $\tilde{\rho}_i$ for $i=1,\ldots,k-1$ might be much larger than
  $\tilde{\rho}_k$. We need to do some extra work.
  We have that \eqref{barxt2xxxy} implies
  \begin{equation}\label{barxt2xxxy2}
    \dot{\tilde{x}}^k = (I\otimes A)\tilde{x}^k
    -[\tilde{\rho}^{k}
    \tilde{L}_{11} \otimes BB\T P]\tilde{x}^k
    +[I \otimes B](\tilde{s}^k+\tilde{v}^k),
  \end{equation}
  where $\tilde{s}^k,\tilde{v}^k,\tilde{L}_{11}$ are
  obtained from $\hat{s}^k,\hat{v}^k,\bar{L}_{11}$ by
  applying the permutation introduced above. A permutation clearly
  does not affect the bound we obtained in 
  \eqref{K3K4} and we obtain
  \begin{equation}\label{K3K42}
    \| \tilde{s}^k \|_\infty < K_3,\qquad \| \tilde{v}^k \|_2 < K_4.
  \end{equation}
  For any $j<k$, we decompose
  \begin{equation}\label{tildeL11}
    \tilde{x}^j_I = \begin{pmatrix}
      \tilde{x}_1 \\ \vdots \\ \tilde{x}_j \end{pmatrix},\quad
    \tilde{x}^j_{II} = \begin{pmatrix}
      \tilde{x}_{j+1} \\ \vdots \\ \tilde{x}_k \end{pmatrix},\quad
    \tilde{L}_{11} =\begin{pmatrix} \tilde{L}^j_{11} & \tilde{L}^j_{12} \\
      \tilde{L}^j_{21} & \tilde{L}^j_{22}
    \end{pmatrix},
  \end{equation}
  with $\tilde{L}^j_{11}\in \R^{j\times j}$ while
  \begin{equation*}
    \tilde{s}^k = \begin{pmatrix} \tilde{s}^j \\
      \tilde{s}^j_c \end{pmatrix},\quad
    \tilde{v}^k = \begin{pmatrix} \tilde{v}^j \\
      \tilde{v}^j_c \end{pmatrix},
  \end{equation*}
  with 
  $\tilde{s}^j\in \R^{nj}$, $\tilde{v}^j\in \R^{nj}$
  and
  \begin{equation}\label{checkx}
    \check{x}^j = \tilde{x}^j_I  + (\tilde{L}_{11}^j)^{-1}
    \tilde{L}^j_{12} \tilde{x}^j_{II},
  \end{equation}
  for $j<k$ while $\check{x}^k=\tilde{x}^k$.  We will show that 
  \begin{equation}\label{upbound}
    \tilde{\rho}_j^2 \tilde{V}_j
  \end{equation}
  is bounded for $j=1,\ldots, k$  where
  \begin{equation}\label{tildeVj}
    \tilde{V}_j=(\check{x}^j)\T \left[ \tilde{H}^j \tilde{\rho}^{-j} \otimes P
  \right] \check{x}^j, 
  \end{equation}
  while
  \[
    \tilde{\rho}^j=\begin{pmatrix}
      \tilde{\rho}_1 & 0      & \cdots & 0 \\
      0      & \tilde{\rho}_2 & \ddots & \vdots \\
      \vdots & \ddots & \ddots & 0 \\
      0      & \cdots & 0      & \tilde{\rho}_j
    \end{pmatrix},\quad
    \tilde{\rho}_c^j=\begin{pmatrix}
      \tilde{\rho}_{j+1} & 0      & \cdots & 0 \\
      0      & \tilde{\rho}_2 & \ddots & \vdots \\
      \vdots & \ddots & \ddots & 0 \\
      0      & \cdots & 0      & \tilde{\rho}_k
    \end{pmatrix},\quad
    \tilde{H}^j=\begin{pmatrix}
      \tilde{\alpha}_1 & 0      & \cdots & 0 \\
      0      & \tilde{\alpha}_2 & \ddots & \vdots \\
      \vdots & \ddots & \ddots & 0 \\
      0      & \cdots & 0      & \tilde{\alpha}_j
    \end{pmatrix}.
  \]
  Note that if $k=N$ then we have \eqref{tildeVN} instead of \eqref{tildeVj}.
  We also define
  \[
    \tilde{\zeta}^j=\begin{pmatrix} \tilde{\zeta}_1 \\ \vdots \\
      \tilde{\zeta}_j \end{pmatrix},\qquad
    \tilde{\zeta}^j_c=\begin{pmatrix} \tilde{\zeta}_{j+1} \\ \vdots \\
      \tilde{\zeta}_k \end{pmatrix}.
  \]
  It is not hard to verify that if $\tilde{\rho}_j^2\tilde{V}_j$ is bounded for
  $j=1,\ldots, k$ then we have that  $\hat{x}_j$ is arbitrary small for
  large $t$ since the $\tilde{\rho}_j$ are increasing and converge to
  infinity. On the other hand, $\hat{x}_j$ arbitrary small for large
  $t$ and for all $j=1,\ldots, k$ would imply that the $\rho_j$ are
  all constant for large $t$ which contradicts our premise that the
  $\rho_j$ are unbounded for $j=1,\ldots, k$.
  
  We will establish boundedness of $\tilde{\rho}_j^2\tilde{V}_j$ via
  recursion. Assume for $i=j$ we have either
  \begin{equation}\label{ttg3b}
    \tilde{\rho}_i^2 \tilde{V}_i
  \end{equation}
  is unbounded or
  \begin{equation}\label{ttg3c}
    \int_{s}^{s+1} \tilde{\rho}_i^2 (\tilde{\zeta}^i)\T (I
    \otimes PBB\T P) \tilde{\zeta}^i\, \textrm{d}t
  \end{equation}
  is unbounded while, for $j<i\leq k$, both \eqref{ttg3b} and
  \eqref{ttg3c} are bounded. We will show this yields a
  contradiction. Note that in the above we already established
  \eqref{ttg3b} and \eqref{ttg3c} are bounded for $i=k$.
  
  Using \eqref{barxt2xxxy2} and \eqref{checkx}, we obtain
  \begin{equation}\label{checkxj}
    \dot{\check{x}}^j = (I\otimes A) \check{x}^j -
    \left[\tilde{\rho}^j\tilde{L}_{11}^j\otimes
      BB\T P \right] \check{x}^j+(I\otimes B)(\check{s}+\check{v})
    + (W_j \tilde{\rho}^j_c \otimes BB\T P)\tilde{\zeta}_c^j,
  \end{equation}
  where
  \begin{align*}
    \check{s}^j & =\tilde{s}^j+(\tilde{L}_{11}^j)^{-1}\tilde{L}_{12}^j
    \tilde{s}_j^c,\\
    \check{v}^j & =\tilde{v}^j+(\tilde{L}_{11}^j)^{-1}\tilde{L}_{12}^j
    \tilde{v}^j_c + \left[ (\tilde{L}^j_{11})^{-1}\tilde{L}^j_{12}
    \tilde{\rho}^j_c \otimes BB\T P \right] \tilde{\zeta}_c^j. 
  \end{align*}
  Finally,
  \eqref{ttg3c} implies $\tilde{\rho}_i B\T P\tilde{\zeta}_i$ has
  bounded energy for $i=j+1,\ldots ,k$ which in turn yields
  $(\tilde{\rho}_c^j\otimes B\T P)\tilde{\zeta}_c^j$ has bounded
  energy. Clearly, $\check{s}^j$ is bounded while $\check{v}^j$ has
  bounded energy by \eqref{K3K42}
  and that there exist constants such that
  \begin{equation}\label{tildeK3K4}
    \| \check{s}^j \|_\infty < \tilde{K}_3,\quad
    \int_s^{s+1} (\check{v}^j)\T\check{v}^j \textrm{d}t < \tilde{K}_3.
  \end{equation}
  We obtain
  \begin{equation*}
    \dot{\tilde{V}}_j =(\check{x}^j)\T \left[
      \tilde{H}^j\tilde{\rho}^{-j} \otimes (-I +PBB\T
      P) \right] \check{x}^j \\
    -(\check{x}^j)\T \left[ (\tilde{H}^j\tilde{L}_{11}^j+\tilde{L}_{11}^j
      \tilde{H}^j)\otimes PBB\T P\right]\check{x}^j \\
    + 2(\check{x}^j)\T \left[ \tilde{H}^j\tilde{\rho}^{-j} \otimes PB \right]
    (\check{s}^j+\check{v}^j),
  \end{equation*}
  using \eqref{tildeVj} and \eqref{checkxj}. \eqref{HkL11} implies
  \[
    \tilde{H}^j\tilde{L}_{11}^j+\tilde{L}_{11}^j
    \tilde{H}^j > 3\gamma (\tilde{L}_{11}^j)\tilde{L}_{11}^j.
  \]
  Moreover, we have
  \[
    \tilde{\rho}_j \tilde{H}^j \tilde{\rho}^{-j}  \leq \gamma,
  \]
  for some $\gamma>0$ using \eqref{perm}. This yields that there
  exists $\eta>0$ such that
  \begin{equation}\label{gettingthere}
    \dot{\tilde{V}}_j \leq - \tilde{V}_j -\gamma
    (\check{x}^j)\T \left[ (\tilde{L}_{11}^j)\T \tilde{L}_{11}^j \otimes PBB\T
      P\right]\check{x}^j+\tfrac{\eta}{\tilde{\rho}_j^2}
    [(\check{s}^j)\T 
    \check{s}^j+(\check{v}^j)\T \check{v}^j].
  \end{equation}
  Similarly to our derivation of \eqref{lastone3} we choose $T_1$ such that
  $2\tilde{V}_j \leq \gamma \tilde{\rho}_j$. This yields that,
  \eqref{tildeK3K4} and \eqref{gettingthere} imply
  \begin{equation}\label{llast}
    \tilde{\rho}_{j}^2(s+\sigma) \tilde{V}_j(s+\sigma) < e^{-\sigma}
    \tilde{\rho}_{j}^2(s^+) \tilde{V}_j(s^{+})  + 2\tilde{K}_3^2\eta,
  \end{equation}
  for all $\sigma\in (0,1]$, and any $s\in \N$ wih $s>T_1$. This by
  itself does not imply that $\tilde{\rho}_{j}^2(s) \tilde{V}_j(s)$ is
  bounded because at time $s\in \N$ there might be a discontinuity due
  to the reordering we performed.

  If a new permutation has the same $j$ agents with the largest
  $\rho_i$ then $\tilde{V}_j(s^+) = \tilde{V}_j(s^-)$ and we obtain,
  as before, that there exists some $\eps$ satisfying
  $(1+\eps)e^{-1}<1$ and a $T_2>T_1$ such that
  \begin{equation}\label{llast23}
    \tilde{\rho}_j^2(s^+) \tilde{V}_j(s^+) \leq (1+\eps)
    \tilde{\rho}_j^2(s^-) \tilde{V}_j(s^-),
  \end{equation}
  for $s>T_2$ similarly as we did in the derivation of equation \eqref{bound2}.

  If a new permutation changes the set of $j$
  agents with the largest $\rho_i$ then it is easy to see that
  there exists $A_0>0$ such that a 
  discontinuity can only occur when
  \[
    \tilde{\rho}_j(s)-\tilde{\rho}_{j+1}(s)<A_0
  \]
  for $s$ sufficiently large using Lemma \ref{lem2a}. This implies
  that there exists some $A>0$ such that
  \begin{equation*}
    \tilde{\rho}_{j}^2(s)\tilde{V}_j(s) < (\tilde{\rho}_{j+1}(s)+A_0)^2 \tilde{V}_j(s)\\
    < (\tilde{\rho}_{j+1}(s)+A_0)^2 M \tilde{V}_{j+1}(s)\\
    < \frac{(\tilde{\rho}_{j+1}(s)+A_0)^2}{\tilde{\rho}_{j+1}^2(s)}
    M \tilde{\rho}_{j+1}^2(s) \tilde{V}_{j+1}(s) < A,
  \end{equation*}
  for large $s$ using Lemma \ref{2.10} since we already established that
  $\tilde{\rho}_{j+1}^2 \tilde{V}_{j+1}(s)$ is bounded while
  $\tilde{\rho}_{j+1}$ is increasing to infinity. In that case
  \eqref{llast} shows
  \[
    \tilde{\rho}_{j}^2(s+\sigma) \tilde{V}_j(s+\sigma)
  \]
  is bounded as well. If $\tilde{\rho}_j^2(s) \tilde{V}_j(s)$ is
  larger than $A$ then we know discontinuities of $\tilde{V}_j(s)$ do
  not arise and hence \eqref{llast} and \eqref{llast23} show that
  $\tilde{\rho}_j^2 \tilde{V}_j$ remains bounded. It remains to show that
  \eqref{ttg3c} is bounded. This follows immediately from
  \eqref{gettingthere} in combination with \eqref{tildeK3K4} and the
  boundedness of $\tilde{\rho}_j^2 \tilde{V}_j$.
	
  In this way, we recursively established that \eqref{ttg3b} is
  bounded for $i=1,\ldots,k$ and, as noted before, this implies that
  $\rho_i(t)$ is constant for large $t$ and $i=1,\ldots, k$ which
  contradicts our assumption that some of the $\rho_i$ are
  unbounded. This completes our proof.
\end{proof}

\section{\textbf{Numerical examples}}

In this section, we will show that the proposed protocol achieves
scalable $\delta-$level coherent state synchronization. We study the
effectiveness of our proposed protocol as it is applied to systems
with different sizes, different communication graphs, different noise
patterns, and different $\delta$ values. In all examples of the paper, the weight
of edges of the communication graphs is considered to be equal~$1$.

We consider agent models as
\begin{equation}\label{agent-g-noise-Example}
  \begin{system*}{l}
    \dot{x}_i(t)=\begin{pmatrix}0&1&0\\0&0&1\\0&0&0\end{pmatrix}x_i(t)+\begin{pmatrix}
      0\\0\\1
    \end{pmatrix}u_i(t)+\begin{pmatrix}
      0\\0\\1
    \end{pmatrix}w_i(t),
  \end{system*} 
\end{equation}
for $i=1,\hdots, N.$ We utilize adaptive protocol \eqref{protocol} as following
\begin{equation}\label{protocol-example}
  \begin{system*}{ccl}
    \dot{\rho}_i &=& \begin{cases}
      \zeta_i\T\begin{pmatrix}
        1&2.41&2.41\\
        2.41&5.82&5.82\\
        2.41&5.82&5.82\\
      \end{pmatrix}\zeta_i & \text{ if } \zeta_i\T P \zeta_i \geq d,
      \\
      0 & \text{ if } \zeta_i\T P \zeta_i < d,
    \end{cases} \\
    u_i &=& -\rho_i \begin{pmatrix}
      1&2.41&2.41\\
    \end{pmatrix} \zeta_i,
  \end{system*}
\end{equation}
where $P$ is the solution of the algebraic Riccati equation
\eqref{eq-Riccati} and equals
\[
  P=\begin{pmatrix}
    2.41&2.41&1\\
    2.41&4.82&2.41\\
    1&2.41&2.41\\
  \end{pmatrix}.
\]
\subsection{\textbf{Scalability -- independence to the size of communication networks}}\label{Size}

In this section, we consider MAS with agent models
\eqref{agent-g-noise-Example} and disturbances
\begin{equation}\label{w_i}
  w_i(t)=0.1\sin(0.1it+0.01t^2), \quad i=1,\hdots, N.
\end{equation}
In the following, to illustrate the scalability of proposed protocols,
we study three MAS with $5$, $25$, and $121$ agents communicating over
directed and undirected Vicsek fractal graphs shown in Figure
\ref{vicsek-fractal-graph}. In both following examples, we consider $d=0.5$.

\begin{figure}[ht!]
  \includegraphics[width=0.75\textwidth]{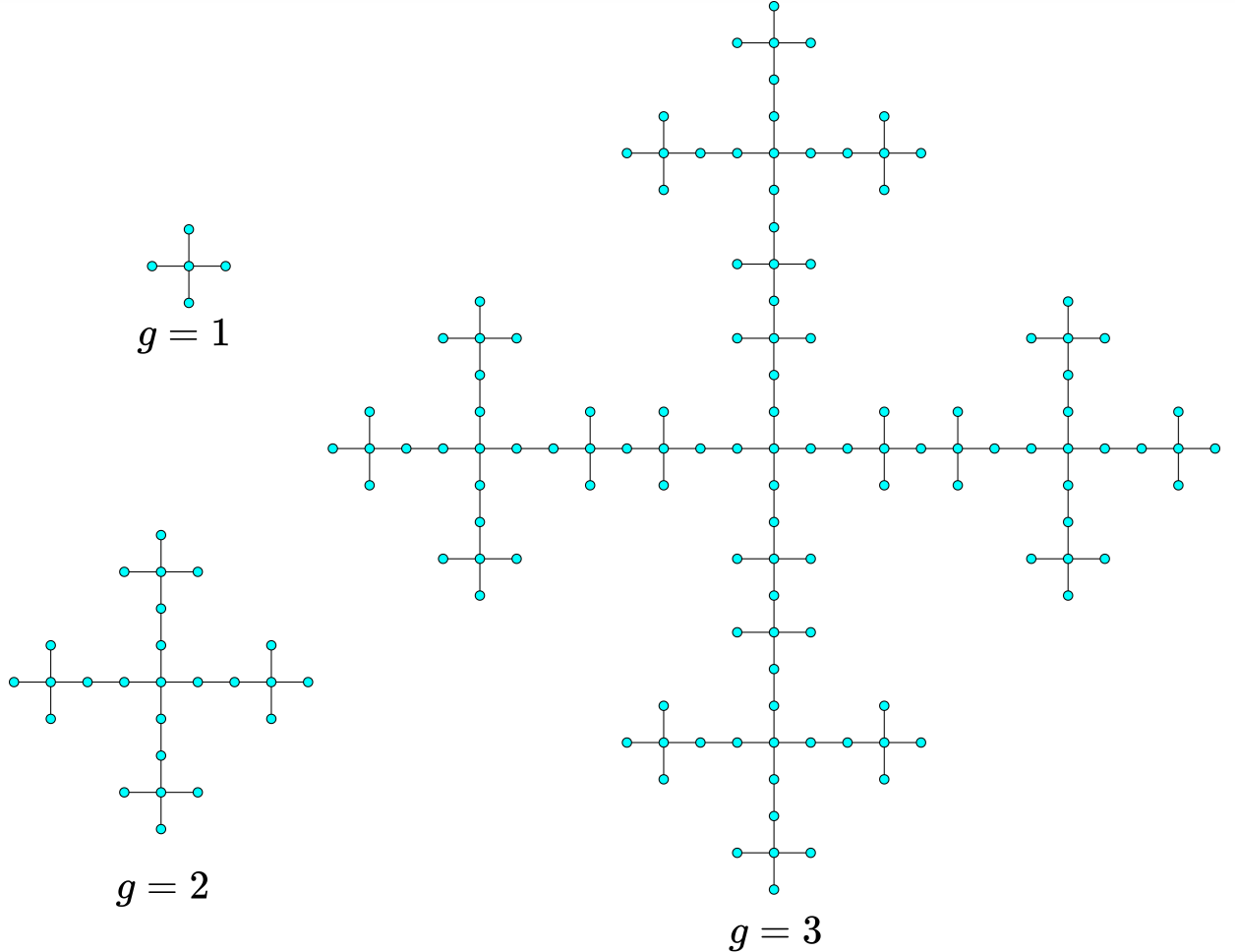}
  \centering
  \caption{Vicsek fractal graphs}\label{vicsek-fractal-graph}
\end{figure}
\begin{figure}[ht!]
  \includegraphics[width=0.1\textwidth]{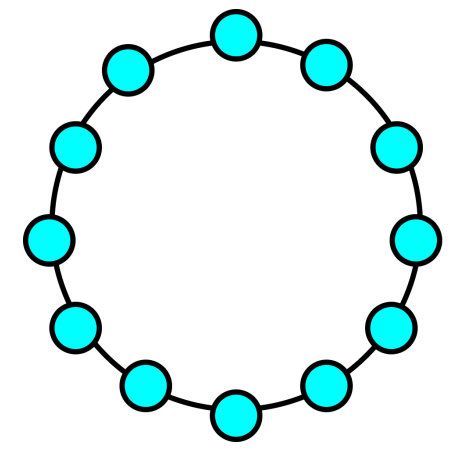} \centering
  \caption{Circulant graph}\label{Circulant_Graph_UD}
\end{figure}

\begin{figure}[ht!]
 \includegraphics[width=0.4\textwidth]{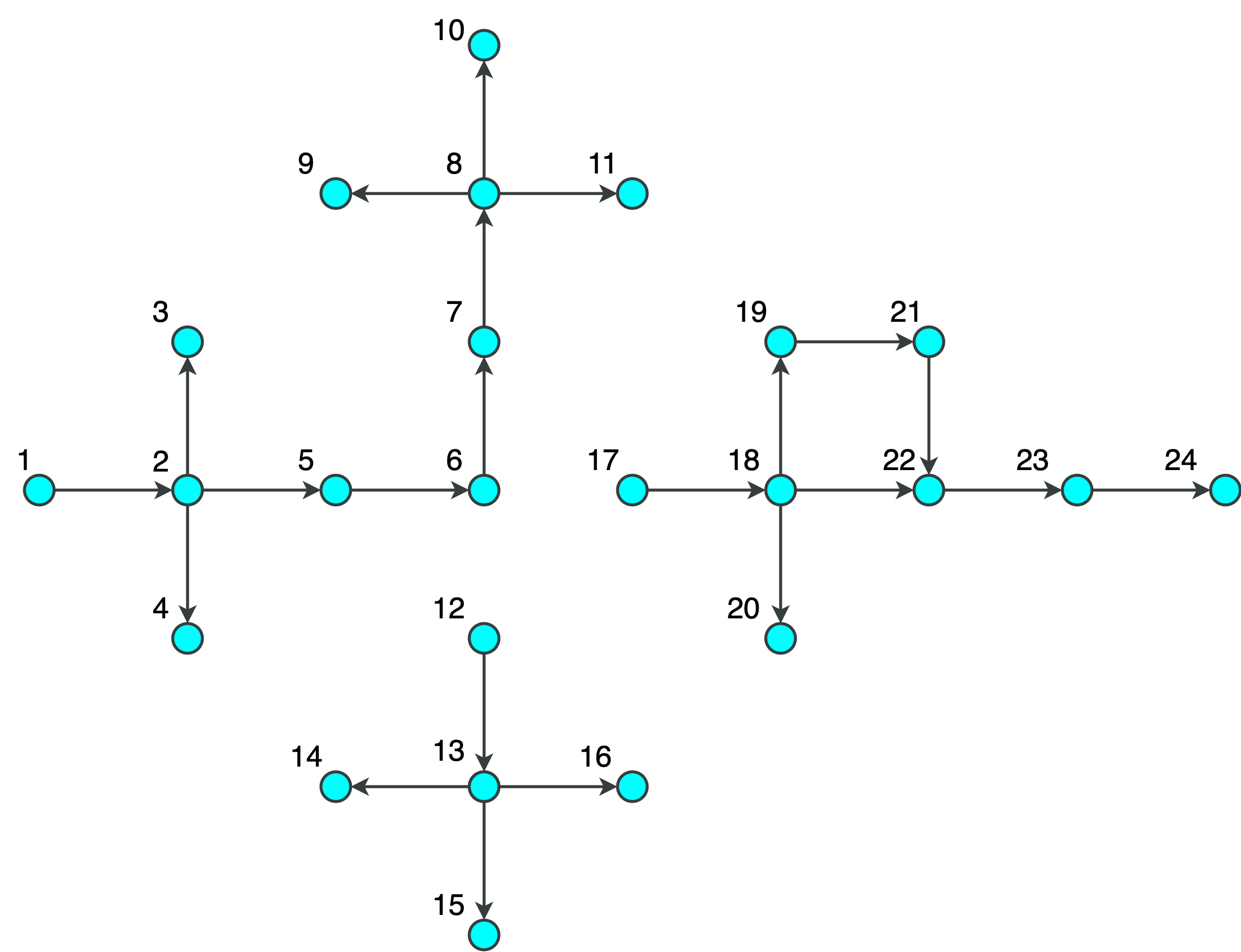} \centering
  \caption{Disconnected directed graph}\label{dis-directed-net}
\end{figure}

\begin{figure}[ht!]
\includegraphics[width=0.4\textwidth]{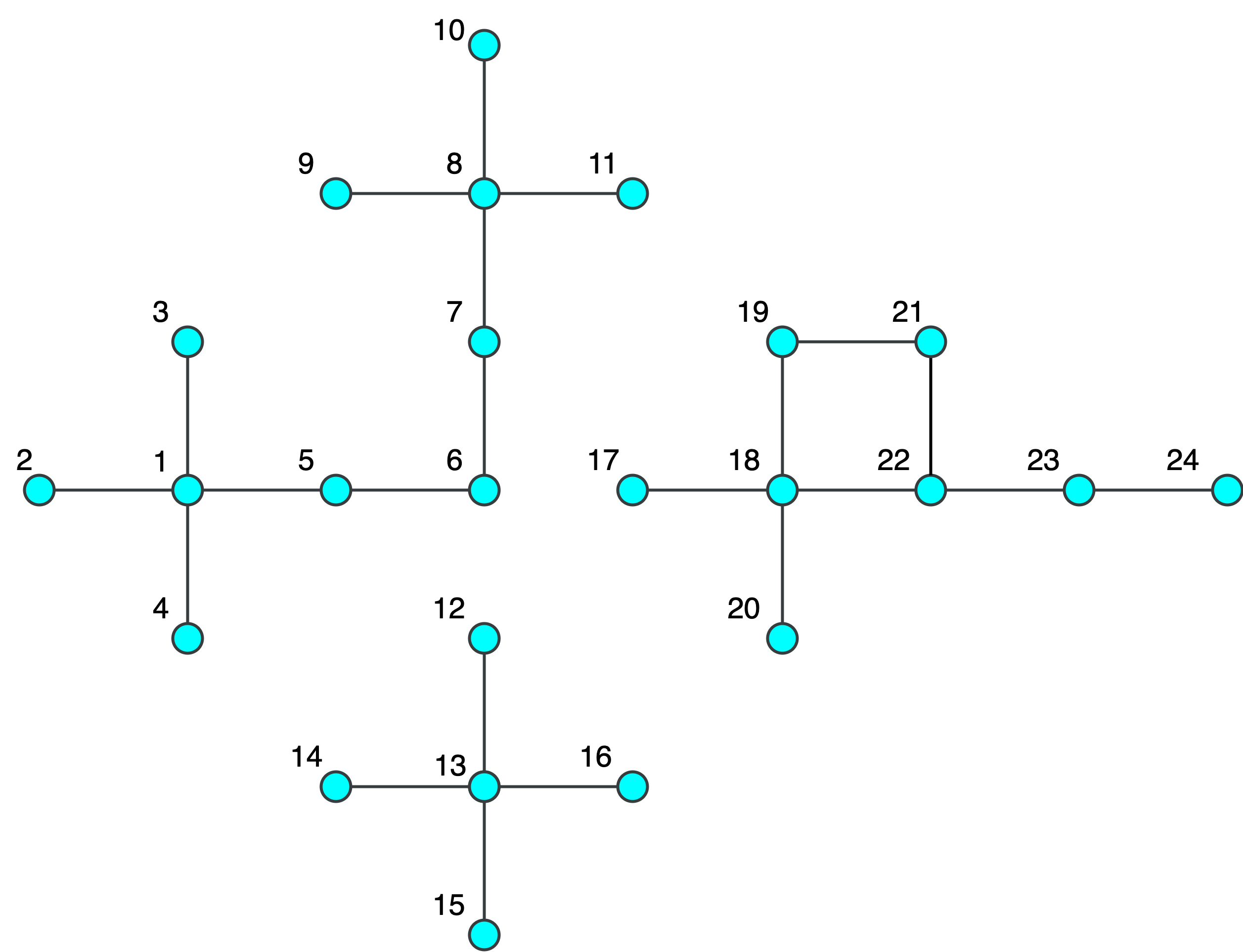} \centering
  \caption{ Disconnected undirected graph}\label{dis-undirected-net}
\end{figure}

\subsubsection{Directed graphs}
First, we consider directed Vicsek fractal graphs. The simulation
results presented in Figures~\ref{figure3a}-\ref{figure3c} clearly
show the scalability of our one-shot-designed protocol. In other
words, the scalable adaptive protocols achieve $\delta-$level coherent
state synchronization independent of the size of the network.

\begin{figure}[p!]
  \centering \includegraphics[width=0.75\textwidth]{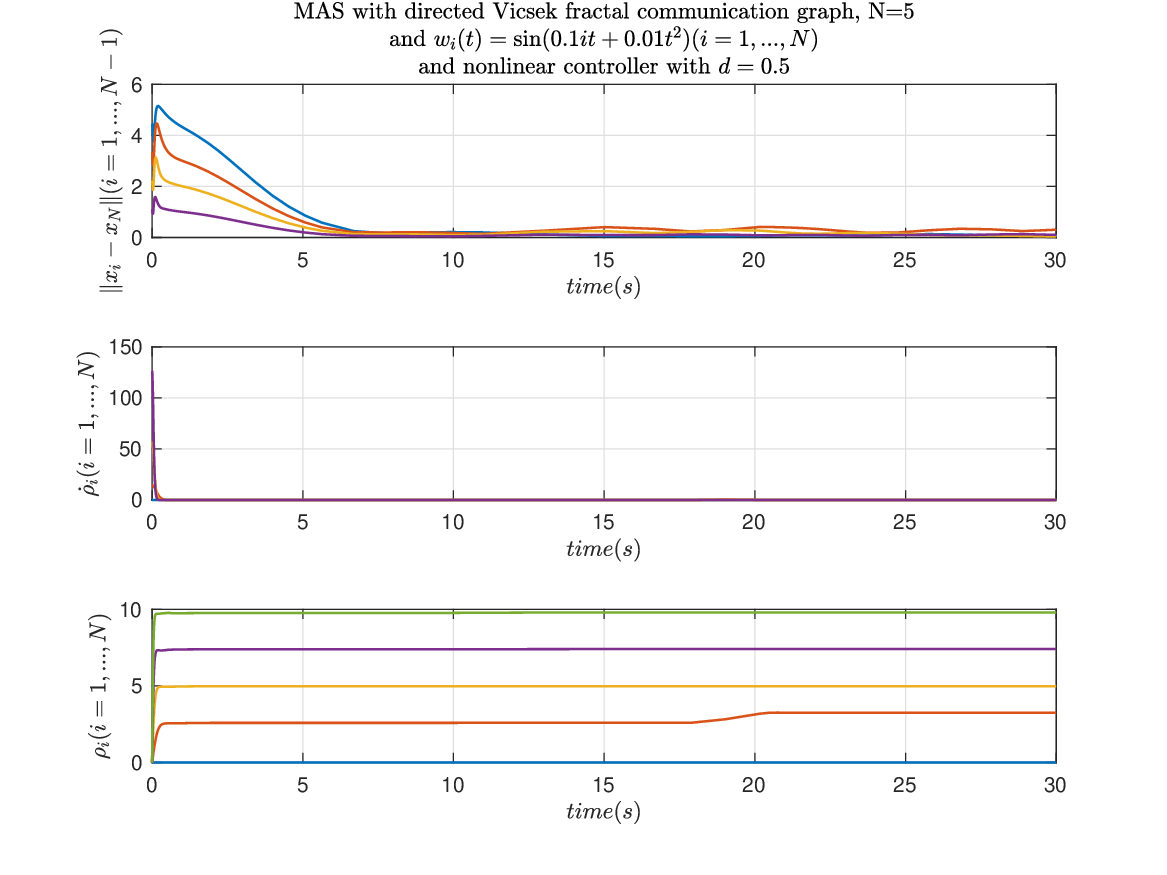}
  \vspace{-3mm}
  \caption[]{{$\delta$-level coherent state synchronization of MAS
      with directed Vicsek fractal communication graphs and $N=5$ in
      the presence of disturbances via protocol
      \eqref{protocol-example} with $d=0.5$}} \label{figure3a}
\end{figure}

\begin{figure}[p!]
	\centering
	\includegraphics[width=0.75\textwidth]{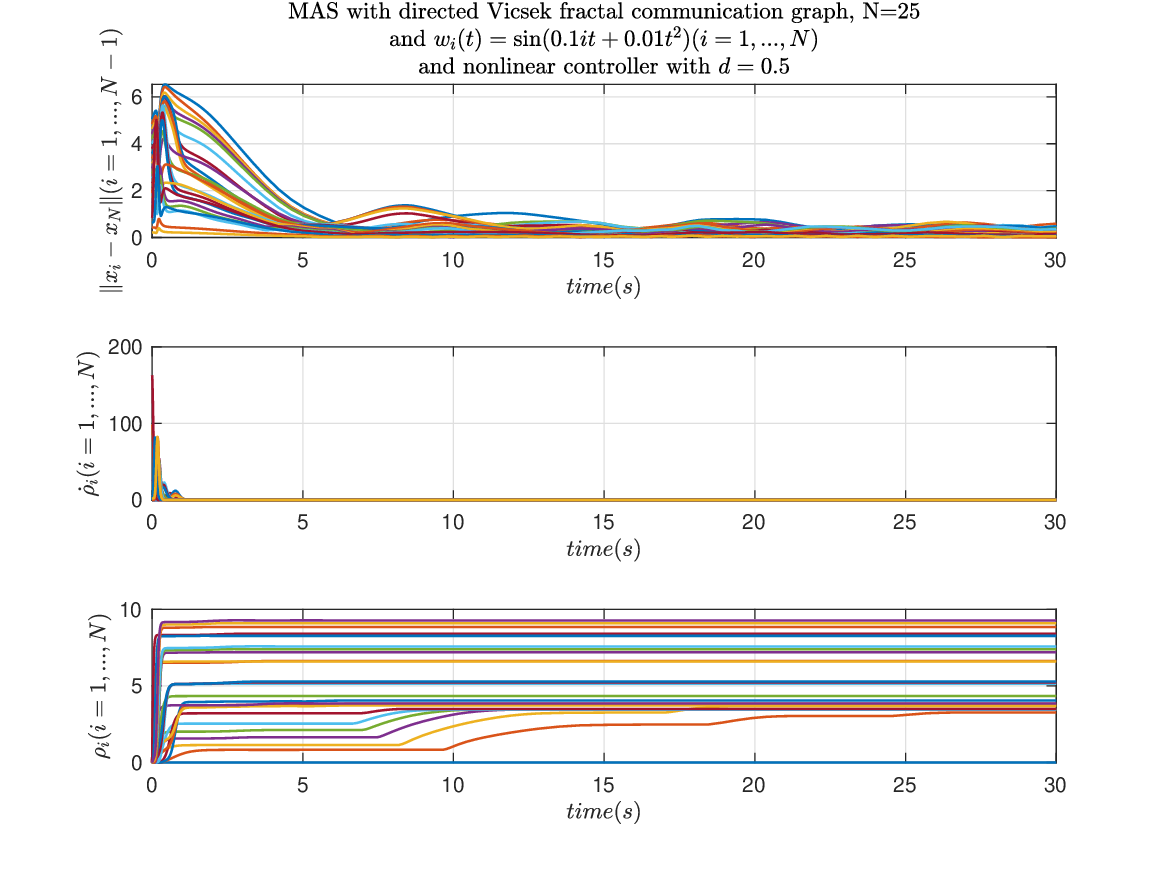}
  \vspace{-3mm}
  \caption[]{{$\delta$-level coherent state synchronization of MAS
      with directed Vicsek fractal communication graphs and $N=25$ in
      the presence of disturbances via protocol
      \eqref{protocol-example} with $d=0.5$}} \label{figure3b}
\end{figure}
\begin{figure}[p!]
  \centering \includegraphics[width=0.75\textwidth]{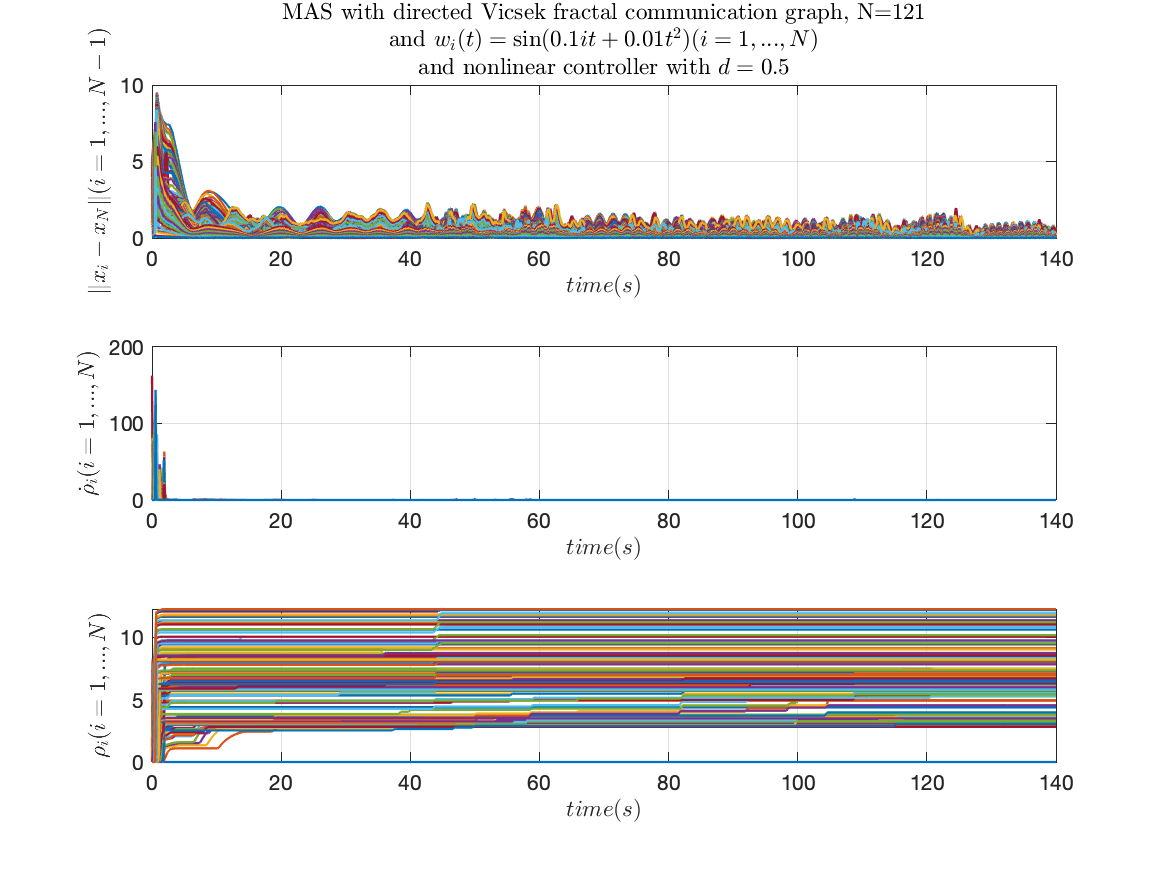}
  \vspace{-3mm}
  \caption[]{{$\delta$-level coherent state synchronization of MAS
      with directed Vicsek fractal communication graphs and $N=121$ in
      the presence of disturbances via protocol
      \eqref{protocol-example} with $d=0.5$}} \label{figure3c}
\end{figure}

\begin{figure}[p!]
  \centering
  \includegraphics[width=0.75\textwidth]{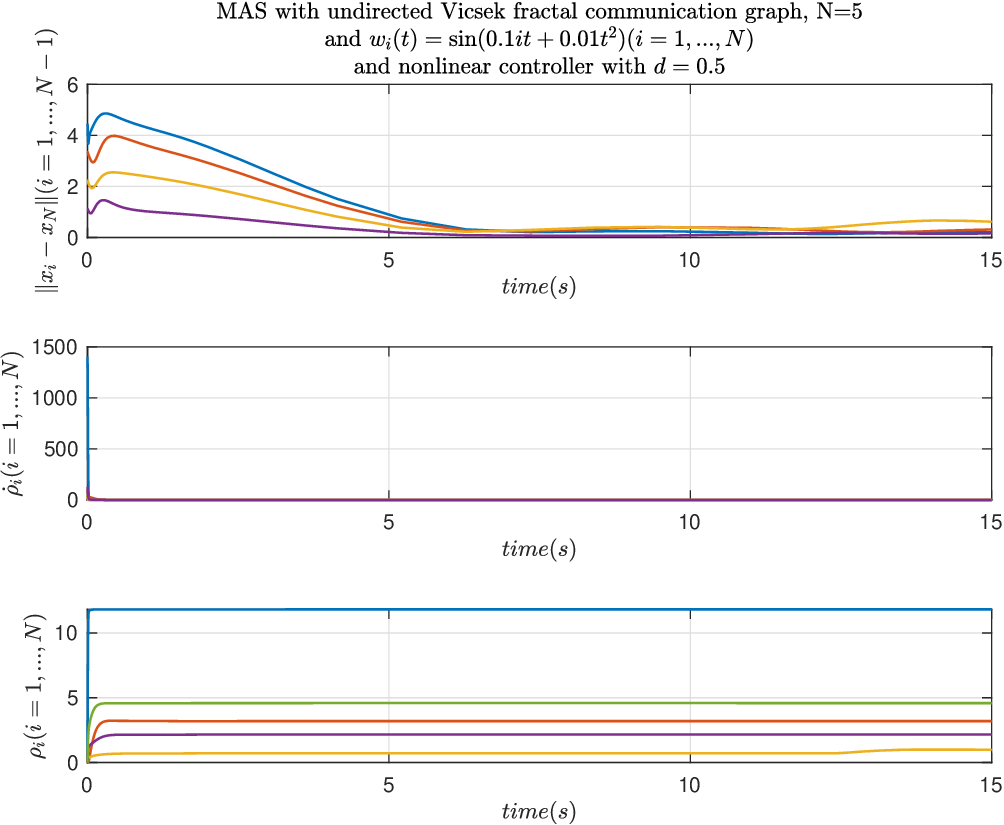}
  \vspace{-3mm}
  \caption[]{{$\delta$-level coherent state synchronization of MAS
      with directed Vicsek fractal communication graphs and $N=5$ in
      the presence of disturbances via protocol
      \eqref{protocol-example} where $d=0.5$}} \label{figure4a}
\end{figure}
\begin{figure}[p!]
  \centering
  \includegraphics[width=0.75\textwidth]{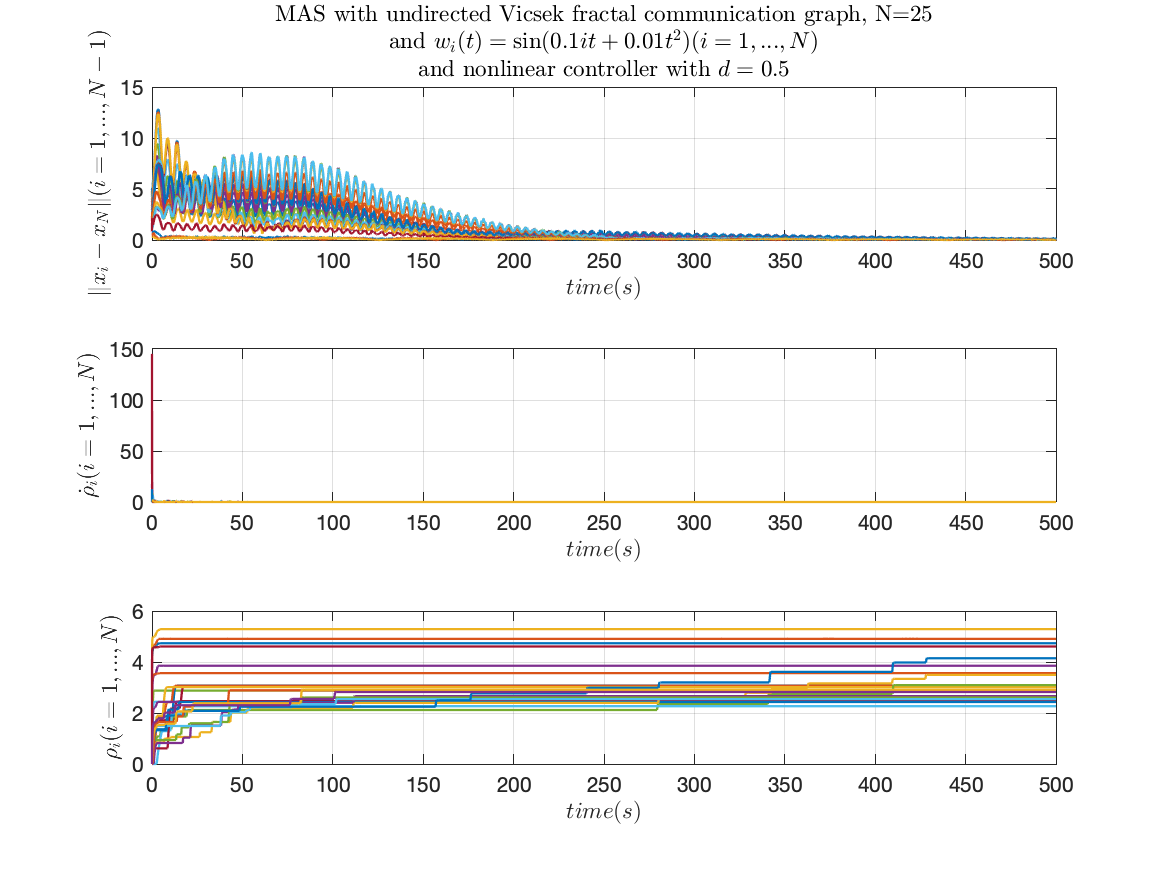}
  \vspace{-3mm}
  \caption[]{{$\delta$-level coherent state synchronization of MAS
      with directed Vicsek fractal communication graphs and $N=25$ in
      the presence of disturbances via protocol
      \eqref{protocol-example} where $d=0.5$}} \label{figure4b}
\end{figure}
\begin{figure}[p!]
  \centering
  \includegraphics[width=0.75\textwidth]{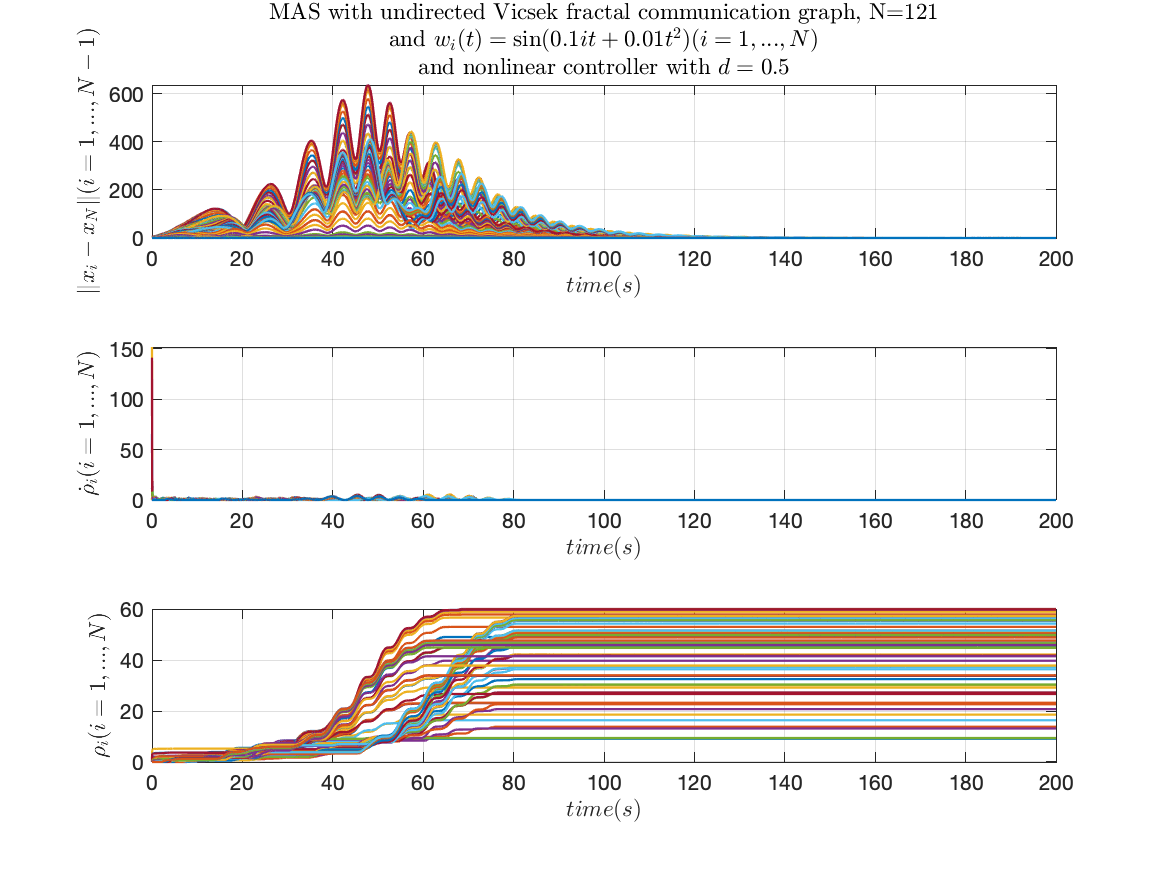}
  \vspace{-3mm}
  \caption[]{{$\delta$-level coherent state synchronization of MAS
      with directed Vicsek fractal communication graphs and $N=121$ in
      the presence of disturbances via protocol
      \eqref{protocol-example} where $d=0.5$}} \label{figure4c}
\end{figure}

\begin{figure}[p!]
  \centering \includegraphics[width=0.75\textwidth]{figure10}
  \vspace{-3mm}
  \caption[]{{$\delta$-level coherent state synchronization of MAS
      with $N=24$ and disconnected directed communication graphs in the
      presence of disturbances via nonlinear protocol with
      $d=0.5$}} \label{figure10}
\end{figure}

\begin{figure}[p!]
  \centering \includegraphics[width=0.75\textwidth]{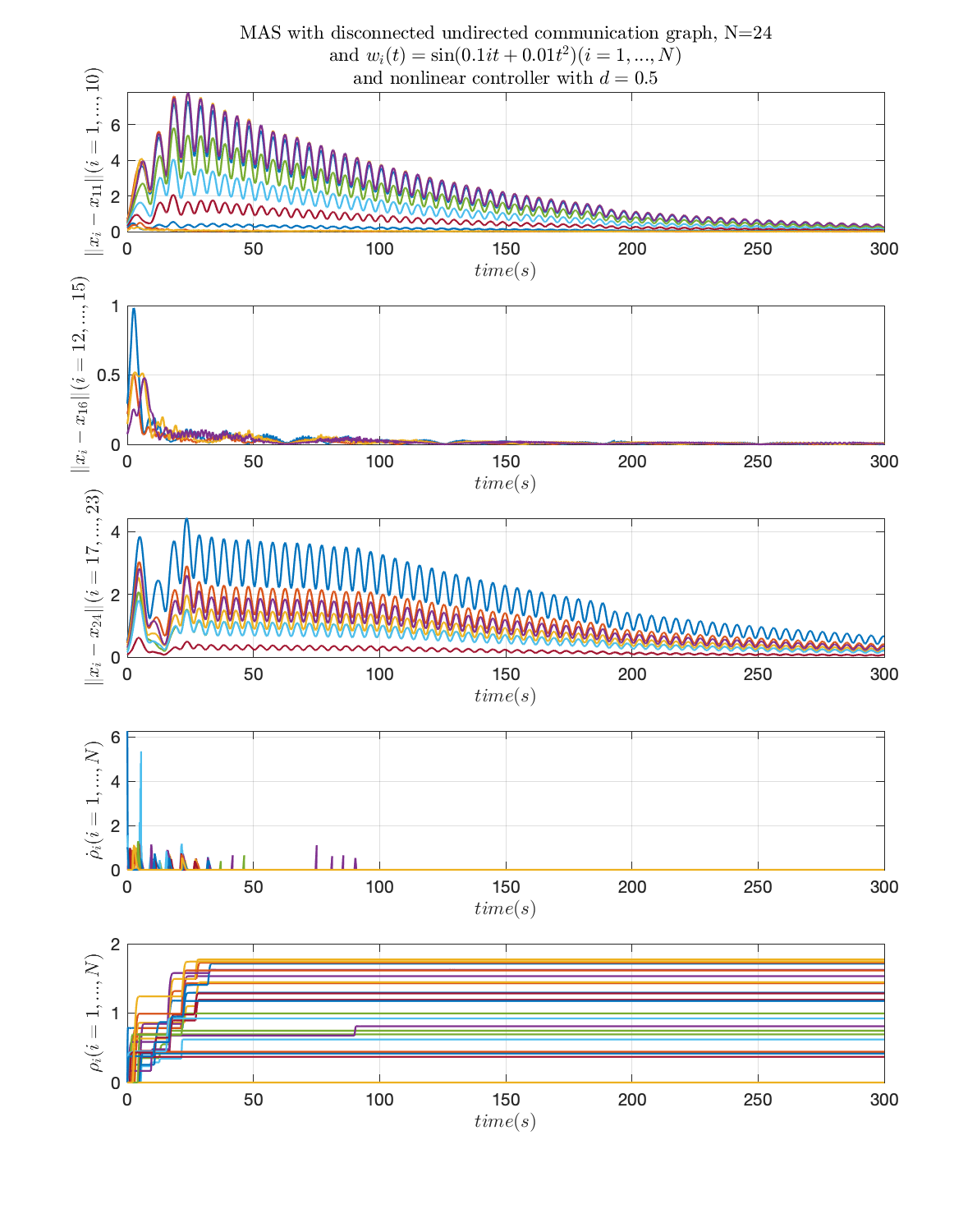}
  \vspace{-3mm}
  \caption[]{{$\delta$-level coherent state synchronization of MAS
      with $N=24$ and disconnected undirected communication graphs in the
      presence of disturbances via nonlinear protocol with
      $d=0.5$}} \label{figure9}
\end{figure}

\subsubsection{Undirected graphs}

Second, we consider undirected Vicsek fractal graphs. The algebraic
connectivity of the undirected Vicsek fractal graphs is presented in
Table \ref{table_vicsek}. It can be easily seen that the size of the
graphs increase the algebraic connectivity of the associated Laplacian
matrix decreases. The simulation results presented in
Figure~\ref{figure4a}-\ref{figure4c} show that the one-shot designed
protocol \eqref{protocol-example}, achieves $\delta-$level coherent
state synchronization regardless of the number of agents and the
algebraic connectivity of the associated Laplacian matrices of the
graphs.
\begin{table}[t!]
  \caption{Algebraic connectivity of undirected Vicsek fractal graphs}
  \centering
  \begin{tabularx}{.3\columnwidth}{X X l}
    \hline
    $N$& g&$\re\{\lambda_2\}$\\
    \hline
    5   &  1&1\\
    25   &  2&0.0692\\
    121   &3&0.0053\\
    \hline
  \end{tabularx}
  \label{table_vicsek}
\end{table}
\vspace{1cm}
\subsection{\textbf{Effectiveness with disconnected
    graphs}} \label{connectivity}

In this section, we consider MAS communicating through disconnected
graphs. In both of the following examples, we consider MAS
\eqref{agent-g-noise-Example} with $N=24$ agents where the agents are
subject to noise \eqref{w_i}. We consider $d=0.5$ in our protocols.
The simulation results show that our proposed protocols effectively
achieve $\delta$-level state synchronization where the agents
communicate through disconnected directed graphs.

\subsubsection{Disconnected directed graphs}

First, we consider a MAS with disconnected directed communication
graphs shown in figure \ref{dis-directed-net}. The disconnected directed graph consists of three bi-components. The simulation results
are presented in Figure \ref{figure10}. In the first three
sub-figures, we showed $\delta$-level state synchronization for the bi-components of the disconnected graph. We also showed the
convergence of $\rho_i(t)$ to constants.

\subsubsection{Disconnected undirected graphs}

Next, we consider a MAS with undirected communication graphs shown in
figure \ref{dis-undirected-net}. The simulation results are presented
in Figure \ref{figure9}. The simulation results illustrate the
effectiveness of our protocols in achieving $\delta$-level coherent
state synchronization for MAS with disconnected communication
networks.

\subsection{\textbf{Effectiveness with different types of
    communication graphs}}\label{Graph}

In this example, we illustrate that the protocol that we designed for
our MAS also achieves synchronization for different types of
communication graphs.  We consider MAS \eqref{agent-g-noise-Example}
with $N=121$ where the agents are subject to
noise \eqref{w_i}. In this example, the agents are communicating
through directed circulant graphs shown in
Figure~\ref{Circulant_Graph_UD}.  Figure
\ref{N121-Dcirculant-sin-delta05} shows the effectiveness of our
designed protocol \eqref{protocol-example} for MAS with directed
circulant communication graphs.

\subsection{\textbf{Robustness to different noise patterns}}\label{Noise}

In this example, we analyze the robustness of our protocols to
different noise patterns. We consider MAS with $N=121$ in
section \ref{Size} communicating through a directed Vicsek fractal
graph. In this example, we assume that agents are subject to
\begin{equation}\label{w_i2}
  w_i(t)=0.01it-\round(0.01it), \quad i=1,\ldots, N.
\end{equation}
Figure \ref{N121-Dvicsek-it-delta05} shows that our designed protocol is
robust even in the presence of noises with different patterns. 

\begin{figure}[th!]
  \centering \includegraphics[width=0.75\textwidth]{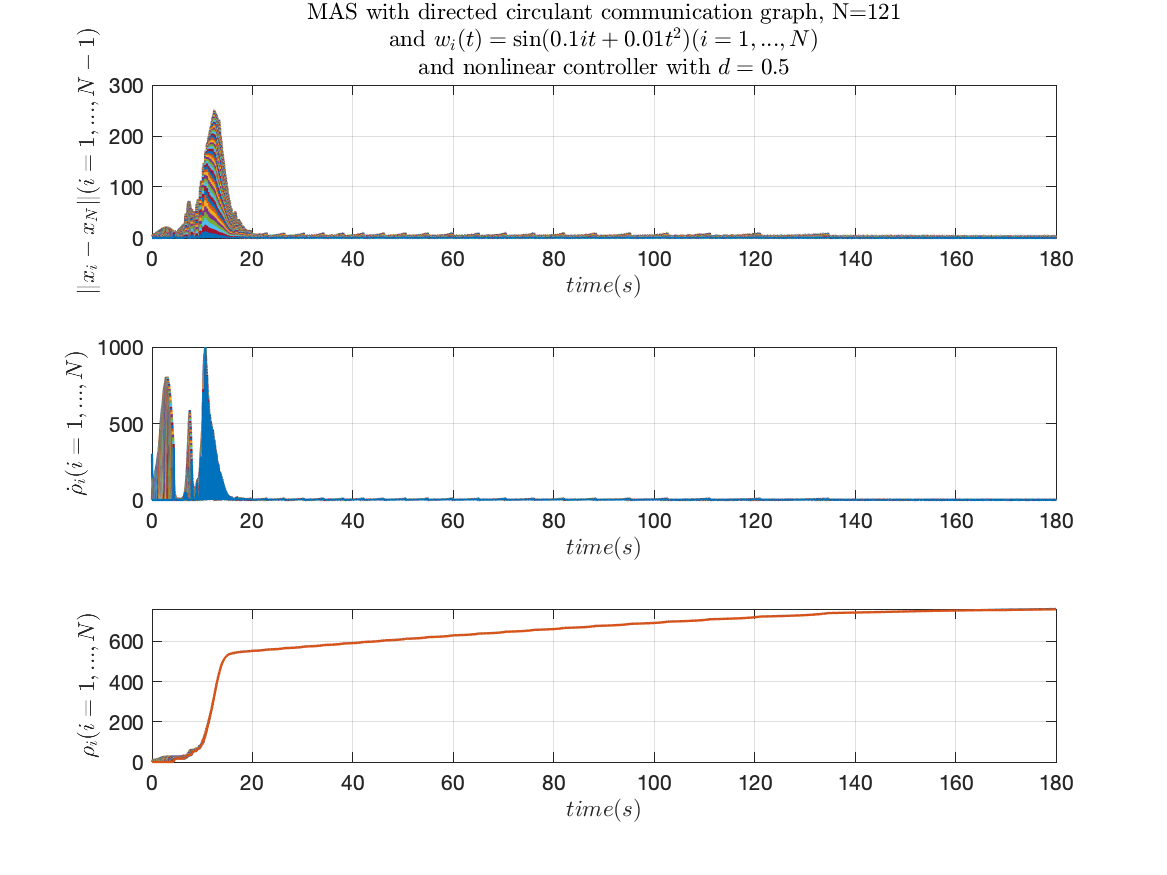}
  \vspace{-3mm}
  \caption[]{{$\delta$-level coherent state synchronization of MAS
      with $N=121$ and directed circulant communication graphs in the
      presence of disturbances via nonlinear protocol with
      $d=0.5$}} \label{N121-Dcirculant-sin-delta05}
\end{figure}
\begin{figure}[th!]
  \centering
  \includegraphics[width=0.75\textwidth]{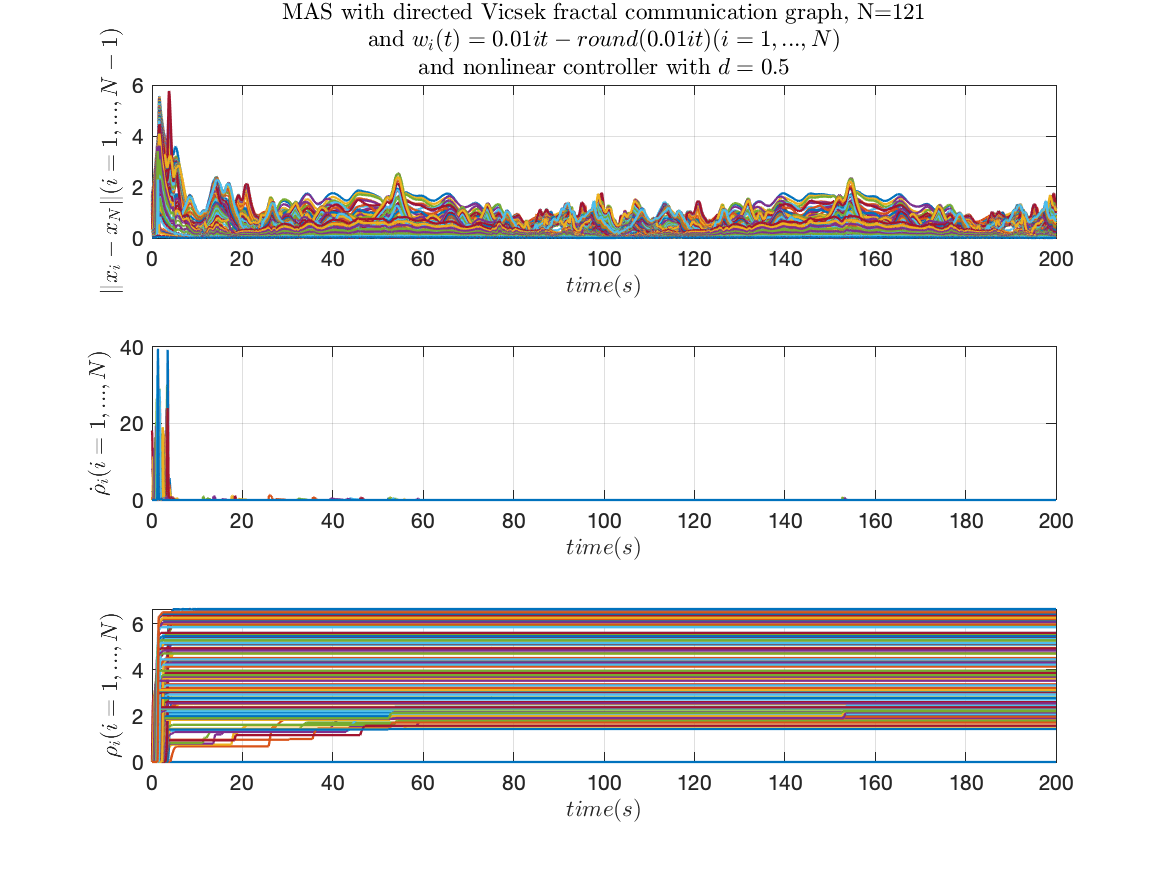}
  \vspace{-3mm}
  \caption[]{{$\delta$-level coherent state synchronization of MAS
      with $N=121$ and directed Vicsek fractal communication graphs in
      the presence of disturbances $w_i=0.01it-round(0.01it)$ via
      nonlinear protocol with
      $d=0.5$}} \label{N121-Dvicsek-it-delta05}
\end{figure}

\subsection{\textbf{Effectiveness for different values of $\delta$.}}

Finally, in this section, we show the effectiveness of the proposed
protocol for different values of $\delta$ (or, equivalently, different
values of $d$).  Similar to the previous examples, we consider the MAS
of section \ref{Size} with $N=121$ communication through directed
Vicsek fractal graphs where the agents are subject to noise
\eqref{w_i}. In this example, we choose $d=0.2$. The simulation
results presented in Figure \ref{N121-DVicsek-sin-delta02} show the
effectiveness of our protocol independent of the value of
$d$.\clearpage

\begin{figure}[th!]
  \centering
  \includegraphics[width=0.75\textwidth]{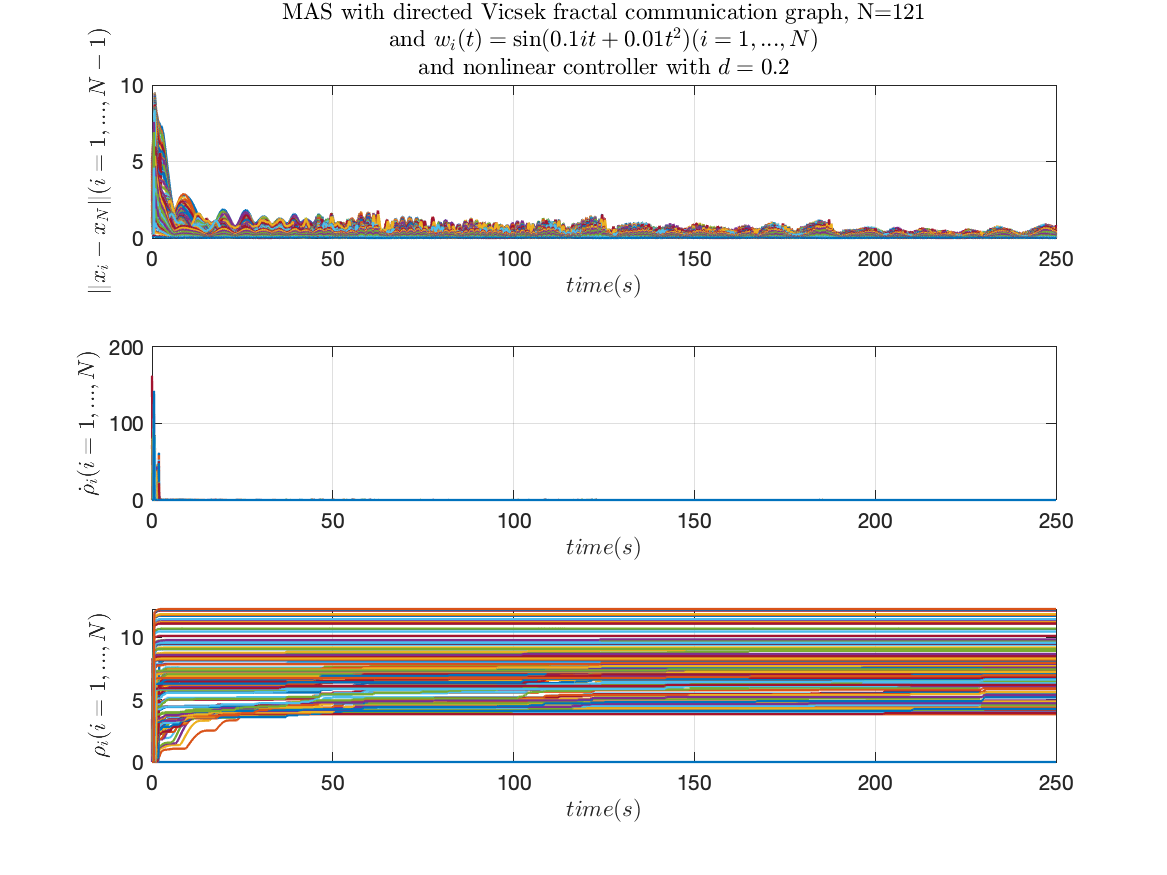}
  \vspace{-3mm}
  \caption[]{{$\delta$-level coherent state synchronization of MAS
      with $N=121$ and directed Vicsek fractal communication graphs in
      the presence of disturbances via nonlinear protocol with
      $d=0.2$}} \label{N121-DVicsek-sin-delta02}
\end{figure}

\section{Conclusion}

This paper studied scalable $\delta-$level coherent state
synchronization of MAS where agents were subject to bounded
disturbances/noises. The results of this paper can be utilized in the
string stability of vehicular platoon control and power systems. The
directions for future research on scalable $\delta-$level coherent
synchronization are $1.$ Scalable $\delta-$level coherent
synchronization of MAS with partial-state coupling; $2.$ Considering
non input additive disturbance \emph{i.e.,} where
$\im E\not\subset \im B$. The authors are conducting research in both
directions.




\newcounter{equation2}
\setcounter{equation2}{\value{equation}}
\appendix
\setcounter{equation}{\value{equation2}}%

\begin{lemma}\label{2.8}
  Consider a directed graph with Laplacian matrix $L$ which is strongly
  connected. Then, there exists $\alpha_1,\ldots,\alpha_N>0$ such that
  \begin{equation}\label{Hlyap}
    H^NL +L\T H^N \geq 3\gamma L\T L, 
  \end{equation}
  for some $\gamma >0$ with $H^N$ given by \eqref{HN} for $k=N$.
\end{lemma}

\begin{proof}
  Choose a left eigenvector of the Laplacian $L$ associated with
  eigenvalue $0$
  \[
    \begin{pmatrix} \alpha_1 & \cdots & \alpha_N \end{pmatrix} L =0.
  \]
  Because the network is strongly connected, by \cite[Theorem
  4.31]{qu-book-2009} we can choose $\alpha_1,\ldots,\alpha_N>0$ and
  obtain
  \begin{equation}\label{app1}
    H^NL +L\T H^N \geq 0
  \end{equation}
  Note that $H^N L$ has the structure of a Laplacian matrix with a
  zero row sum but also a zero column sum. The latter implies that
  $L\T H^N$ also has the structure of a Laplacian matrix. The sum of
  two Laplacian matrices still have the structure of a Laplacian
  matrix. In other words, $H^NL +L\T H^N$ has the structure of a
  Laplacian matrix. Note that this is an undirected
  graph and there is an edge between nodes $i$ and $j$ if 
  there is either an edge from node $i$ to $j$ or an edge from node $j$ to $i$
  in the original graph associated with the Laplacian matrix
  $L$. Since the graph associated with the Laplacian matrix $L$ was
  strongly connected this implies that the graph associated with the
  Laplacian matrix $H^NL +L\T H^N$ is also strongly connected. But
  then the rank of the matrix $H^NL +L\T H^N$ is equal to $N-1$ and
  hence we obtain
  \begin{equation}\label{app2}
    \text{Ker} (H^NL +L\T H^N ) = \text{Ker} L\T L = \Span \{ \textbf{1} \}.
  \end{equation}   
  \eqref{app1} and \eqref{app2} together imply that there exists a
  $\gamma$ such that \eqref{Hlyap} is satisfied.
\end{proof}

We also need to prove that the Lyapunov function in our paper is
decreasing in $\rho$ which is established in the following lemma.

\begin{lemma}\label{2.9}
  The quadratic form
  \[
    V=z\T Q_{\rho} z
  \]
  with $Q_{\rho}$ given by \eqref{Qrho} is decreasing in $\rho_i$ for
  $i=1,\ldots, k$.
\end{lemma}

\begin{proof}
  Note that
  \begin{equation}\label{appstar1}
    Q_{\rho} \textbf{1} =0,
  \end{equation}
  since $\textbf{h}_N\T \rho^{-N} \textbf{1} = \mu_N^{-1}$. We define
  \[
    z=\begin{pmatrix} z_1 \\ \vdots \\ z_N \end{pmatrix},\qquad
    \bar{z} =\begin{pmatrix} z_1-z_N \\ \vdots \\
      z_{k-1}-z_N \end{pmatrix},\qquad
    \textbf{h}_{N-1} =\begin{pmatrix} \alpha_1\\ \vdots \\
      \alpha_{N-1}\end{pmatrix}
  \]
  and we find
  \[
    V =
    \bar{z}\T (\rho^{N-1})^{-1} (\rho^{N-1}H^{N-1} - \mu_N \textbf{h}_{N-1}\textbf{h}_{N-1}\T )
    (\rho^{N-1})^{-1} \bar{z},
  \]
  using \eqref{appstar1}. Some simple algebra establishes that
  \[
     (\rho^{N-1})^{-1} (\rho^{N-1}H^{N-1} - \mu_N \textbf{h}_{N-1}\textbf{h}_{N-1}\T )
    (\rho^{N-1})^{-1} = \left[ \rho^{N-1}(H^{N-1})^{-1} + \alpha_N^{-1}\rho_N
    \textbf{1}_{N-1}\textbf{1}_{N-1}\T \right]^{-1},
  \]
  and hence
  \[
    V= \bar{z}\T \left[ \rho^{N-1}(H^{N-1})^{-1} + \alpha_N^{-1}\rho_N
    \textbf{1}_{N-1}\textbf{1}_{N-1}\T \right]^{-1} \bar{z},
  \]
  which clearly establishes that $V$ is decreasing in  $\rho_i$ for
  $i=1,\ldots, k$.
\end{proof}

\begin{lemma}\label{2.10}
  Consider $\tilde{V}_j$ as defined in \eqref{tildeVj} for $j<N$ and
  in \eqref{tildeVN} for $j=N$. There exists $M$ such that for
  $j=2,\ldots ,k$ we have
  \begin{equation}\label{Mbound}
    \tilde{V}_{j-1} \leq M \tilde{V}_j
  \end{equation}
  provided $\tilde{\rho}_j \leq 2\tilde{\rho}_i$ for $i=1,\ldots,j-1$.
\end{lemma}

\begin{proof}
  First, consider the case $j=k=N$. We have \eqref{tildeL11} with
  $\tilde{L}_{11}\in \R^{N\times N}$ is obtained from $L$ by
  permutation. We have
  \[
    \tilde{L}=\begin{pmatrix}
      \tilde{L}^{k-1}_{11} & \tilde{L}^{k-1}_{12} \\
      \tilde{L}^{k-1}_{22} & \tilde{L}^{k-1}_{22}
    \end{pmatrix}
  \]
  with $\tilde{L}^{k-1}_{22}\in \R$. The property
  $\tilde{L}_{11}\textbf{1}=0$ then implies
  \[
    (\tilde{L}^{k-1}_{11})^{-1}\tilde{L}^{k-1}_{12} =-\textbf{1},
  \]
  which implies
  \[
    \check{x}^{N-1}=\begin{pmatrix} \tilde{x}_1-\tilde{x}_N \\ \vdots
      \\ \tilde{x}_{N-1}-\tilde{x}_N 
    \end{pmatrix}.
  \]
  Using the same algebra as in the proof of Lemma \ref{2.9}, we find
  \[
    \tilde{V}_N = (\check{x}^{N-1})\T \left[ (\rho^{N-1}(\tilde{H}^{N-1})^{-1}
      + \tilde{\alpha}_N^{-1}\tilde{\rho}_N \textbf{1}\textbf{1}\T
      )^{-1} \otimes I \right]\check{x}^{N-1}.
  \]
  Choose $\beta_N$ such that $\tilde{\alpha}_i \leq \beta_{N}
  \tilde{\alpha}_N$ which implies $\tilde{H}^{N-1} \leq
  \beta_N\tilde{\alpha}_N I$. Also note $\textbf{1}\textbf{1}\T \leq N
  I$. We obtain
  \[
    \tilde{V}_N \geq  (\check{x}^{N-1})\T \left[ \left(\rho^{N-1}(\tilde{H}^{N-1})^{-1}
      + 2N\beta_N \rho^{N-1}(\tilde{H}^{N-1})^{-1} 
      \right)^{-1} \otimes I \right]\check{x}^{N-1} =
    (1+N\beta_N)^{-1}\tilde{V}_{N-1},
  \]
  which implies \eqref{Mbound} for $j=N$ provided $M>1+2N\beta_N$.

  Next, we consider $j<N$. Note that \eqref{tildeL11} implies
  \[
    \tilde{L}_{11}^j = \begin{pmatrix}
      \tilde{L}_{11}^{j-1}   & \tilde{L}_{12,a}^{j-1} \\
      \tilde{L}_{21,a}^{j-1} & \tilde{L}_{22,a}^{j-1} 
    \end{pmatrix},\qquad  \check{x}^j = \begin{pmatrix} \check{x}^j_{j-1} \\
      \check{x}^j_j \end{pmatrix},
  \]
  with $\tilde{L}_{12,a}^{j-1}\in \R^j$ and $\check{x}^j_j\in \R^n$. We have
  \begin{align*}
    [\tilde{L}_{11}^j \otimes I]\check{x}^j = \left[ \begin{pmatrix} \tilde{L}_{11}^j &
        \tilde{L}_{12}^j \end{pmatrix} \otimes I \right] \tilde{x}^k
    [\tilde{L}_{11}^{j-1} \otimes I]\check{x}^{j-1} = \left[ \begin{pmatrix} \tilde{L}_{11}^{j-1} &
        \tilde{L}_{12}^{j-1} \end{pmatrix} \otimes I \right] \tilde{x}^k.
  \end{align*}
  Since
  $\begin{pmatrix} \tilde{L}_{11}^{j-1} & \tilde{L}_{12}^{j-1} \end{pmatrix}$
  are the first $j-1$ rows of
  $\begin{pmatrix} \tilde{L}_{11}^j & \tilde{L}_{12}^j \end{pmatrix}$
  and using our decomposition of $\tilde{L}_{11}^j$ we obtain
  \[
    (\tilde{L}_{11}^{j-1}\otimes I) \check{x}^j_{j-1} + (\tilde{L}_{12,a}^{j-1}
    \otimes I) \check{x}^j_j = (\tilde{L}_{11}^{j-1}\otimes I) \check{x}^{j-1},
  \]
  which implies
  \[
    \check{x}^{j-1} = \check{x}^j_{j-1} + (S\otimes I) \check{x}^j_j,
  \]
  where $S\in \R^{j-1}$ given by
  $S=(\tilde{L}_{11}^{j-1})^{-1}\tilde{L}_{12,a}^{j-1}$. Note that
  $S\T S$ is therefore a scalar. Choose $\beta_j$ such that
  $\tilde{\alpha}_i \leq \beta_{j} \tilde{\alpha}_j$ which implies
  $\tilde{H}^{j-1} \leq \beta_j\tilde{\alpha}_j I$.
  
  We find
  \begin{align*}
    \tilde{V}_{j-1}= (\check{x}^j)\T \left[ \tilde{H}^j \tilde{\rho}^{-j} \otimes P
    \right] \check{x}^j &=
    (\check{x}^j_{j-1} + (S\otimes I) \check{x}^j_j)\T \left[
      \tilde{H}^{j-1} (\tilde{\rho}^j)^{-1}  \otimes P
    \right] (\check{x}^j_{j-1} + (S\otimes I) \check{x}^j_j)\\
    &\leq  (\check{x}^j_{j-1} )\T \left[
      \tilde{H}^{j-1} (\tilde{\rho}^j)^{-1}  \otimes P
    \right] \check{x}^j_{j-1} +  (\check{x}^j_j)\T (S\T \otimes I) \left[
      \tilde{H}^{j-1} (\tilde{\rho}^j)^{-1}  \otimes P
    \right] (S\otimes I) \check{x}^j_j \\
    &\leq \tilde{V}_j +)\check{x}^j_j)\T \left[
      S\T \tilde{H}^{j-1} (\tilde{\rho}^j)^{-1}S  \otimes P
    \right] \check{x}^j_j \\
    &\leq \tilde{V}_j +\check{x}^j_j)\T \left[
      \beta_j \alpha_j (\tilde{\rho}^j)^{-1}S\T S  \otimes P
    \right] \check{x}^j_j \\
    &\leq \tilde{V}_j + S\T S \beta_j \tilde{V}_j,
  \end{align*}
  which implies \eqref{Mbound} for $i=j$ provided $M>1+S\T S \beta_j$.
\end{proof}

\end{document}